\documentclass[11pt,a4paper]{article}

\usepackage{authblk}

\usepackage{amsmath}
\usepackage{amssymb}
\usepackage{amsthm}
\usepackage{mathabx}
\usepackage{fullpage}
\usepackage{mdwlist}
\usepackage{hyperref}
\usepackage{graphicx}
\usepackage{bbm}
\usepackage{bm}

\usepackage{tikz}
\usetikzlibrary{decorations.pathmorphing}

\newtheorem{theorem}{Theorem}
\newtheorem{corollary}[theorem]{Corollary}
\newtheorem{lemma}[theorem]{Lemma}

\newtheorem{proposition}[theorem]{Proposition}

\newtheorem{remark}[theorem]{Remark}


\newcommand{\poly}{\mathop{\mathrm{poly}}}

\newcommand{\F}{\mathbb{F}}

\newcommand{\prefix}{\sqsupset}






\title{Towards Optimal Depth-Reductions for Algebraic Formulas}
\date{}
\author[1]{Herv\'{e} Fournier}
\author[2]{Nutan Limaye}
\author[3]{Guillaume Malod}
\author[4]{Srikanth Srinivasan\footnote{The author is grateful for a research visit sponsored by the Guest researchers faculty program at Universit\'{e} Paris Cit\'{e} in summer 2022.}}
\author[5]{S\'{e}bastien Tavenas}
\affil[1,3]{Universit\'{e} Paris Cit\'{e}, IMJ-PRG \protect\\
\textit{Emails: herve.fournier@imj-prg.fr, guillaume.malod@imj-prg.fr}}
\affil[2]{ITU Copenhagen, \protect\\ \textit{Email: nuli@itu.dk}}
\affil[5]{Aarhus University, \protect\\ \textit{Email: srinivasan.srikanth@gmail.com}}
\affil[3]{Univ. Savoie Mont Blanc, CNRS, LAMA \protect\\ \textit {Email: sebastien.tavenas@univ-smb.fr}}

\begin{document}
\maketitle
\thispagestyle{empty}

\begin{abstract}
Classical results of Brent, Kuck and Maruyama (IEEE Trans.~Computers 1973) and Brent (JACM 1974) show that any algebraic formula of size~$s$ can be converted to one of depth~$O(\log s)$ with only a polynomial blow-up in size. In this paper, we consider a fine-grained version of this result depending on the degree of the polynomial computed by the algebraic formula.

Given a homogeneous algebraic formula of size~$s$ computing a polynomial $P$ of degree~$d$, we show that $P$ can also be computed by an (unbounded fan-in) algebraic formula of depth~$O(\log d)$ and size~$\poly(s).$ Our proof shows that this result also holds in the highly restricted setting of monotone, non-commutative algebraic formulas. 

This improves on previous results in the regime when $d$ is small (i.e.,~$d = s^{o(1)}$). In particular, for the setting of $d = O(\log s),$ along with a result of Raz (STOC 2010, JACM 2013), our result implies the same depth reduction even for \emph{inhomogeneous} formulas. This is particularly interesting in light of recent algebraic formula lower bounds, which work precisely in this ``low-degree'' and ``low-depth'' setting.

We also show that these results cannot be improved in the monotone setting, even for commutative formulas.
\end{abstract}

\newpage

\section{Introduction}
\label{sec:intro}

In this paper, we study a basic question regarding computational tradeoffs between two resources for the model of \emph{algebraic formulas}.

An algebraic formula $F$ for a multivariate polynomial $P(x_1,\ldots, x_n)$ is simply an algebraic expression for $P$ made up of nested additions and multiplications. Equivalently, it can be defined as a rooted directed tree where the leaves are labelled by variables and internal nodes (or gates) compute either linear combinations or products of their children (a formal definition can be found in Section~\ref{sec:prelims} below). Unless otherwise stated, we do not bound the number of children of a gate (in other words, we consider formulas of \emph{unbounded fan-in}).

The two basic computational resources that describe the complexity of an algebraic formula $F$ are its \emph{size}, which is the number of leaves in the underlying tree, and its \emph{depth}, which naturally is the depth of the tree. Polynomials\footnote{Strictly speaking, we should refer here to infinite sequences of polynomials, but we ignore this distinction.} that have efficient (i.e.,~$\poly(n)$-sized) algebraic formulas form the algebraic complexity class $\mathrm{VF}.$ Like its Boolean counterpart $\mathrm{NC}^1$, this is a natural and important complexity class. 

Tradeoffs between size and depth in the setting of formulas and related models of computation have been the focus of many previous works, starting from the early 1970s~\cite{Spira71tradeoffs,BKM, Brent, VSBR, ShamirSnir, Nisan, Raznc2nc1, RYbalance, AV, Koiran, Tav13, GKKSdepth3, KSS, CKSV, LST-ECCC}. We describe a few such results here.

\begin{itemize}
\item In the Boolean setting, Spira~\cite{Spira71tradeoffs} and independently Khrapchenko (see~\cite{YK68}) showed that any Boolean formula of size $s$ can be converted to a Boolean formula of depth $O(\log s)$ while keeping the size bounded by~$s^{O(1)}.$ These results were replicated in the algebraic setting in results of Brent, Kuck and Marayuma~\cite{BKM} and Brent~\cite{Brent}. The constants involved in the bounds for the depth and the size were improved in many follow-up works~\cite{PreparataMuller1, PreparataMuller2, Kos86, BCE, BonetBuss}.

\item This question has also been studied for the more general model of \emph{algebraic circuits,} where the underlying tree is replaced by a directed acyclic graph (DAG). A well-known result of Valiant, Skyum, Berkowitz and Rackoff~\cite{VSBR} showed that an algebraic circuit of size $s$ computing a polynomial of degree $\poly(s)$ can be converted to a circuit of depth\footnote{In the bounded fan-in case, this would be depth $O(\log^2 s)$ instead.} $O(\log s)$ and size $\poly(s)$. These results were also shown to hold for \emph{multilinear}\footnote{A circuit or formula is multilinear if each of its gates computes a multilinear polynomial.} circuits by Raz and Yehudayoff~\cite{RYbalance}.

\item The above results are known to be tight in various settings. In the \emph{monotone} case\footnote{where the underlying field is $\mathbb{R}$ and all constants are non-negative, so cancellations do not occur.}, Shamir and Snir~\cite{ShamirSnir} showed the existence of an explicit polynomial $P$ with a $\poly(n)$-sized circuit such that any circuit of depth $o(\log n)$ for $P$ is of superpolynomial size. Similar results were obtained in the multilinear case by  Raz~\cite{Raznc2nc1} and Chillara, Limaye and Srinivasan~\cite{CLS} (for multilinear circuits and formulas, respectively).

\item Beginning with the work of Agrawal and Vinay~\cite{AV}, a recent line of work~\cite{Koiran, Tav13, GKKSdepth3, KSS, CKSV, KOS19} has shown that algebraic circuits and formulas can be converted to formulas of \emph{constant depth} with a \emph{sub-exponential} blow-up in size. In contrast, our focus in this paper is primarily on reducing depth as much as possible while keeping the size bounded by $\poly(s)$, as in the results listed previously.
\end{itemize}

\paragraph{The question.} In this paper, we ask the question of whether stronger \emph{depth-reduction} results can be proved given a bound $d$ on the degree of the polynomial $P(x_1,\ldots,x_n)$ computed by the algebraic formula. In general, an algebraic formula of size $s$ can compute a polynomial of degree at most $s$. When $d = s$ (or $d = s^{\Omega(1)}$), the above results imply that an algebraic formula $F$ for $P$ can be converted to another formula $F'$ of depth $O(\log d)$ without significant blow-up in size. Does such a result hold for any $d$ (or more specifically, when $d = s^{o(1)}$)?

Note that this question only makes sense for algebraic formulas of unbounded fan-in. If the fan-in of each gate is bounded by a constant, then any formula of size $s$ must have depth at least $\Omega(\log s)$ (and so, Brent, Kuck, and Marayuma's result~\cite{BKM} is optimal). However, in many settings (see e.g. the third motivation below), we want a finer analysis of the formula depth that can be achieved by formulas of unbounded fan-in. 

\paragraph{Motivation.} While the question is fairly natural in our opinion, there are also many concrete reasons that lead to this line of inquiry.
\begin{itemize}
\item It is easy to see from the proof of Valiant, Skyum, Berkowitz, and Rackoff~\cite{VSBR} that any algebraic \emph{circuit} can be depth-reduced to depth $O(\log d)$ with only a $\poly(s,d)$ blow-up in size. So the natural generalization for small degrees is indeed true in the setting of algebraic circuits.
\item It also follows from previous results that the depth bound of $O(\log s)$ can be improved when the degree is $d = o(\log s).$ This is due to a result of Raz~\cite{Raztensor}: any formula $F$ for a homogeneous polynomial $P$ of degree $d$ can be converted to a \emph{homogeneous} formula\footnote{A homogeneous formula is one where each gate computes a homogeneous polynomial. This means that the formula does not compute intermediate polynomials of  degree larger than $d$.} $F'$ efficiently. Further, it is easy to see that any homogeneous formula computing a polynomial of degree $d$ has depth $O(d) = o(\log s).$ So, in this regime for the degree, standard depth-reduction results \emph{can} be strengthened.
\item Finally, very recent  results in algebraic complexity~\cite{LST-FOCS} have suggested a way of proving lower bounds against \emph{low-depth} algebraic formulas for computing \emph{low-degree} polynomials, which naturally raises the question of obtaining the best possible depth-reduction results in this setting. 

More specifically, Limaye, Srinivasan, and Tavenas~\cite{LST-FOCS} showed how to prove lower bounds against algebraic formulas (and even circuits) of small depth. Their proof proceeds by converting an algebraic formula of size $s$ and depth $\Delta$ to a homogeneous algebraic formula of size $\poly(s)$ and depth $O(\Delta),$ and then proving lower bounds against homogeneous algebraic formulas of depth $O(\Delta)$. An important point regarding the first step is that it only works in the ``low-degree setting'' of $d = O(\log s/\log \log s).$ The second step proves lower bounds against homogeneous formulas of depth up to $O(\log \log d).$

To make this proof idea work for general (unbounded-depth) algebraic formulas, we would like to be able to homogenize and depth-reduce algebraic formulas as much as possible. The aforementioned result of Raz~\cite{Raztensor} already shows that we can homogenize algebraic formulas efficiently in the low-degree setting. So it is natural to investigate the best possible depth-reduction for homogeneous algebraic formulas in the low-degree setting. 
\end{itemize}

\paragraph{Results.} Our main result is a depth-reduction result for {\em homogeneous} formulas that efficiently reduces the depth to $O(\log d)$, matching what was already known for algebraic circuits by the result of~\cite{VSBR}. 

\begin{theorem}[Main Result]
\label{thm:main-intro}
Let $F$ be a homogeneous algebraic formula of size $s$ computing a polynomial $P$ of degree $d \geq 2$. Then $P$ is also computed by a homogeneous formula $F'$ of size $\poly(s)$ and depth $O(\log d).$ Moreover, if $F$ is monotone and/or non-commutative, then so is $F'$.
\end{theorem}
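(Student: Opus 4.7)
The plan is to proceed by strong induction on $d$. The base case $d \leq 2$ is immediate since any homogeneous polynomial of degree at most $2$ with formula size $s$ admits a depth-$2$ formula of size $O(s)$. For the inductive step, the goal is to decompose $F$ as $F = B^* + \sum_{i=1}^{N} A_i \cdot G_i$, where each $A_i$, $G_i$, and $B^*$ is a homogeneous polynomial of degree at most $2d/3$ computed by a formula derived from $F$ (hence inheriting monotonicity and non-commutativity). Applying the inductive hypothesis to each piece then yields a formula for $F$ of depth $T(d) \le T(2d/3) + O(1) = O(\log d)$, provided the total size stays $\poly(s)$.

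The heart of the argument is a \emph{balanced subexpression lemma}: every homogeneous formula of degree $d \ge 3$ contains an extractable ``sub-expression'' $G$ of degree in $[d/3, 2d/3]$. To find one, descend from the root, always moving to a child of maximum degree. Since $F$ is homogeneous, all children of a sum gate share the degree of the parent, while at product gates the max-degree child has smaller degree, so the degree sequence along this path is non-increasing. If the descent visits a gate $v$ with $\deg(v) \in [d/3,2d/3]$, take $G = P_v$. Otherwise the degree drops from a value $>2d/3$ to a value $<d/3$ at a product gate all of whose children have degree $<d/3$; since the children's degrees sum to more than $2d/3$, a prefix-product of consecutive children has degree in $[d/3,2d/3]$ (respecting order in the non-commutative case), and we take $G$ to be this product. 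Given $G$ at some location, substituting a fresh variable $y$ there yields a formula $\hat F$ that is linear in $y$ (since $y$ appears exactly once in the tree), so $\hat F = A \cdot y + B$ with $\deg(A) \le d - \deg(G) \le 2d/3$, and $F = A \cdot G + B$ where $A, B$ are computed by formulas of size at most $|F| - |G|$. We iterate: if $\deg(B) > 2d/3$, extract another balanced subexpression from $B$ and repeat. Each extraction removes at least $\deg(G) \ge d/3$ leaves from the residual formula, so this terminates in $O(s/d)$ rounds, producing the desired decomposition.

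The main technical obstacle is controlling the overall size so that it remains $\poly(s)$ rather than quasi-polynomial. The na\"ive accounting --- each extraction produces pieces of size up to $s$, there are $O(s/d)$ extractions per level, and the recursion in $d$ goes $\log_{3/2}(d)$ deep --- only yields $R(s, d) = s^{O(\log d)}$. The additional ingredient I would aim for is to argue that within a single level the contexts $A_i$, subexpressions $G_i$, and final residual $B^*$ use disjoint portions of $F$'s offshoots, so $\sum_i (|A_i| + |G_i|) + |B^*| = O(s)$. With this linear-total-size bound per level, the size recursion $R(s, d) \le \sum_j R(s_j, 2d/3)$ with $\sum_j s_j = O(s)$ unfolds to $R(s,d) = s^{O(1)}$ by a standard convexity argument, while the depth recursion $T(d) = T(2d/3) + O(1)$ cleanly gives $O(\log d)$. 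Monotonicity and non-commutativity are preserved throughout, since every piece arises as a polynomial combination of subformulas of $F$ with the order of factors respected.
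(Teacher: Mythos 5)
Your overall structure --- extracting balanced subexpressions and recursing on degree --- is a natural first attempt, and you correctly identify the size-accounting as the technical bottleneck. But the ``disjointness'' claim you are aiming for (that within a single level the contexts $A_i$, subexpressions $G_i$, and residual $B^*$ occupy disjoint parts of $F$, so that $\sum_i(|A_i|+|G_i|)+|B^*| = O(s)$) is false, and this gap is fatal. Consider $F = L \times \bigl(\sum_{j=1}^{m} T_j\bigr)$ where $L$ is a homogeneous linear form of size $\Theta(s)$ (degree $1$, so too small ever to be chosen as a $G_i$) and each $T_j$ is a monomial of degree $d-1$. Your descent rule enters the high-degree child $\sum_j T_j$, descends into some $T_j$, and extracts a prefix of its variables as $G_i$. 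The linear form $L$ is a $\times$-sibling on the root-to-$G_i$ path, so it is a factor of \emph{every} context $A_i$; moreover, zeroing $G_i$ only annihilates the product chain up to the nearest $+$-gate, which sits strictly below the root $\times$-gate, so $L$ also survives into every residual $B_i$. Hence $\sum_i |A_i| \approx m \cdot |L| = \Theta(s^2/d)$, not $O(s)$. Unrolling the recursion $R(s,d) \le \sum_j R(s_j, 2d/3)$ with $\sum_j s_j = \Theta(s^2/d)$ over $\Theta(\log d)$ levels gives $R(s,d) = s^{\Theta(\log d)}$, i.e.\ exactly the quasi-polynomial blow-up you set out to avoid. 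This is the same obstruction the paper spells out in its proof overview for the naive recursion $F = G\cdot H_1 + H_2$.

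The paper's actual proof takes a different route that directly neutralizes this duplication. After normalizing $F$ to fan-in $2$ with sum-depth $O(\log s)$ (Brent), it runs induction not on degree alone but on a potential $\phi_\delta(F) = \lceil\log d_F\rceil + \lceil\Delta(F)/\delta\rceil$ combining log-degree with (sum-depth)$/\delta$, where $\delta \approx \log s/\log d$. In each round it cuts \emph{all} gates of strictly smaller potential simultaneously; homogeneity forces the remaining top part $G$ to be a \emph{skew} formula (at a product gate only one child can have degree above half), and Lemma~\ref{lem:skew} shows a skew formula of sum-depth at most $\delta$ has a $\sum\prod$ expansion with at most $2^\delta$ monomials. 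Crucially, a leaf of $G$ that is duplicated across several monomials is always a $\times$-leaf whose sibling is the high-degree child, so its syntactic degree is at most $d/2$; this degree drop absorbs the factor $2^\delta$ into the inductive bound $s_\alpha 2^{\delta\log d_\alpha}$ without compounding. In your counterexample, $L$ is exactly such a duplicable, low-degree leaf --- the degree-controlled duplication bound is the mechanism your proposal is missing.
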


Here, a \emph{monotone} algebraic formula is one that does not exploit cancellations in any way, and a \emph{non-commutative} formula describes a polynomial expression in a domain where the input variables do not commute when multiplied with each other (formal definitions are given in Section~\ref{sec:prelims}). These are both settings in which formula upper bounds are harder to prove, and hence the depth-reduction result in this setting implies the result in the standard setting.
It can also be checked that the depth-reduction procedure above preserves other interesting properties of the formula, such as \emph{multilinearity} and \emph{set-multilinearity}.

Using the aforementioned result of Raz that allows us to homogenize algebraic formulas in the low-degree setting, we get the following depth-reduction even for \emph{inhomogeneous} formulas.

\begin{corollary}
\label{cor:inhom}
Let $d = O(\log n)$. Then a homogeneous polynomial $P$ defined on $n$ variables with degree $d \geq 2$ has an algebraic formula of size $\poly(n)$ if and only if it has an algebraic formula of depth $O(\log d)$ and size $\poly(n).$
\end{corollary}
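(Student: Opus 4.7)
The plan is to combine Theorem~\ref{thm:main-intro} with Raz's homogenization result~\cite{Raztensor}. The forward direction of the ``if and only if'' is immediate: any formula of depth $O(\log d)$ and size $\poly(n)$ is, in particular, a formula of size $\poly(n)$.

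For the non-trivial direction, suppose $P$ is a homogeneous polynomial of degree $d \geq 2$ with $d = O(\log n)$, computed by an algebraic formula $F$ of size $s = \poly(n)$. The first step is to apply Raz's homogenization to $F$, obtaining a \emph{homogeneous} formula $F_1$ computing the same polynomial $P$. The key feature of Raz's result is that the blow-up depends mildly on the degree, and in the regime $d = O(\log s) = O(\log n)$ (which follows since $s = \poly(n)$), the homogenization cost is only $\poly(n)$. So $F_1$ is a homogeneous formula of size $\poly(n)$ computing $P$.

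The second step is to invoke Theorem~\ref{thm:main-intro} on $F_1$. Since $F_1$ is a homogeneous formula of size $\poly(n)$ computing a polynomial of degree $d \geq 2$, this yields a homogeneous formula $F'$ for $P$ of depth $O(\log d)$ and size $\poly(|F_1|) = \poly(n)$, which is what is required.

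There is no real obstacle here, as both ingredients are already in hand. The choice of degree regime $d = O(\log n)$ is exactly what is needed to ensure Raz's homogenization costs only a polynomial factor in $n$; once we are in the homogeneous world, Theorem~\ref{thm:main-intro} handles the depth reduction to $O(\log d)$ without any further size penalty. The main conceptual point is simply that the low-degree regime is the \emph{shared} regime in which both steps are efficient.
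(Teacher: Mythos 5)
There is a genuine gap in the second step. You apply Raz's homogenization (Theorem~\ref{thm:Raz}) directly to the size-$\poly(n)$ formula $F$, claiming the blow-up ``depends mildly on the degree'' and is hence $\poly(n)$ when $d = O(\log n)$. But Raz's theorem, as stated, has two hypotheses you are skipping: it requires all gates to have fan-in $2$, and its size bound is $O\bigl(s\cdot\binom{\Delta+d+1}{d}\bigr)$, which depends on the \emph{product-depth} $\Delta$ as well as on $d$. The ``In particular'' clause giving a $\poly(s)$ homogenized formula requires \emph{both} $d = O(\log s)$ \emph{and} $\Delta = O(\log s)$. A general formula of size $\poly(n)$ (say a deeply nested comb) can have product-depth $\Delta$ as large as $\poly(n)$, in which case $\binom{\Delta+d+1}{d}$ is roughly $\Delta^{d} = n^{\Theta(\log n)}$, a superpolynomial blow-up. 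Your degree restriction $d = O(\log n)$ alone does not rescue this.

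The paper's proof fixes this by inserting one additional preprocessing step before homogenization: it first applies the classical Brent/BKM depth-reduction (Theorem~\ref{thm:standardDR}) to $F$, producing an equivalent fan-in-$2$ formula $F_1$ of size $\poly(n)$ and depth (hence product-depth) $O(\log n)$. Only then is Raz's homogenization applied, so that both $\Delta = O(\log s)$ and $d = O(\log s)$ hold and the homogenized formula $F_2$ is indeed $\poly(n)$-sized. After that, invoking Theorem~\ref{thm:main-intro} on $F_2$ proceeds exactly as you describe. So your second and third steps are right; you are just missing the initial classical depth-reduction that makes the homogenization step legitimate.
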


In particular, this means that to prove superpolynomial lower bounds against general algebraic formulas in the low-degree setting, it suffices to prove such lower bounds against homogeneous algebraic formulas of depth $O(\log d).$ As far as we know, nothing below the trivial $O(d)$ bound was known before for such an implication. This brings us much closer to the regime of depths for which we have lower bounds~\cite{LST-FOCS}.

The statements are even starker in the non-commutative setting, where it is a long-standing problem to prove separations between \emph{Algebraic Branching Programs} (ABPs) and formulas. In recent work~\cite{LST-STOC}, it was shown how to prove such a result for depths that are $o(\sqrt{\log d})$. The results of this paper show that it suffices to prove such a result for depth $O(\log d).$\footnote{We note that Raz's result, though only stated for the commutative setting, works just as well in the non-commutative case.}

Finally, we also show that our results cannot be improved asymptotically in terms of depth, unless we use techniques that exploit cancellations in some way.

\begin{theorem}[Lower Bound]
\label{thm:lowerbound-intro}
Let $n$ and $d = d(n)$ be growing parameters such that $d(n)\leq \sqrt{n}$. Then there is a monotone algebraic formula $F$ of size at most $n$ and depth $O(\log d)$ computing a polynomial $P\in \F[x_1,\ldots,x_n]$ of degree at most $d$ such that any monotone formula $F'$ of depth $o(\log d)$ computing $P$ must have size $n^{\omega(1)}.$
\end{theorem}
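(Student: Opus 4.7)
The plan is to exhibit an explicit polynomial $P$ computed by a balanced monotone formula of size $\le n$ and depth $O(\log d)$, and then rule out shallow monotone formulas for $P$ by a structural argument. For the upper bound, I would set $k = \lfloor \log_2 d \rfloor$ and define $P$ recursively by $P_0 := x$ (a single fresh variable) and
\[
P_{i+1} \;:=\; P_i^{(1)} \cdot P_i^{(2)} \;+\; P_i^{(3)} \cdot P_i^{(4)},
\]
where $P_i^{(1)}, \ldots, P_i^{(4)}$ are four copies of $P_i$ on disjoint fresh variables. Setting $P := P_k$, the polynomial has degree $2^k \le d$, uses $4^k \le d^2 \le n$ distinct variables (so can be viewed as an element of $\F[x_1,\ldots,x_n]$, padding trivially if needed), and is computed by its own defining formula, which is monotone, of depth $2k = O(\log d)$, and of size exactly $4^k \le n$.

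For the lower bound, suppose for contradiction that $F'$ is a monotone formula of depth $\Delta = o(\log d)$ and size $s' = n^{O(1)}$ computing $P$. I would proceed in three steps. First, homogenize $F'$: in the monotone setting, keeping only the degree-$d$ component of each gate preserves monotonicity and depth with a $\poly(d)$ blow-up in size. Second, observe that $P$ is set-multilinear with respect to a natural partition of its $4^k$ variables into $2^k$ blocks of size $2^k$, indexed by the leaf-positions of the defining tree of $P$; projecting each gate of the homogenized formula to its set-multilinear component with respect to this partition yields a monotone set-multilinear formula $F''$ of the same depth still computing $P$. Third, apply a rank-based lower bound for shallow set-multilinear formulas: show that the partial derivative matrix of $P$ on a balanced split of its partition has superpolynomial relative rank, while in a monotone depth-$\Delta$ set-multilinear formula of polynomial size the rank contributed by each product gate is bounded, and the $\Delta$-fold compounding falls short of matching the rank of $P$.

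The main obstacle is the blow-up incurred by set-multilinearization in step two: the standard projection multiplies size by $2^{O(d)}$ per gate, which is polynomial in $n$ only when $d = O(\log n)$. In the regime where $d$ is close to $\sqrt n$, a more delicate direct argument on $F'$ is needed --- for instance, exploiting that each gate of a monotone depth-$\Delta$ formula computes a positive combination of monomials whose "certificate tree" has depth at most $\Delta$, and showing that the specific recursive structure of $P$ is incompatible with certificates of depth below $2k$ without super-polynomial duplication of leaves. A second subtlety is ensuring that the recursion above actually produces a polynomial with high partial-derivative rank on some balanced split; the naive $4$-ary recursion may have to be enriched, e.g.\ by using larger fan-in sums at each level with a careful distribution of variables (in the spirit of the hard polynomials of~\cite{LST-FOCS}), so that the final rank-vs-depth comparison yields the claimed $n^{\omega(1)}$ lower bound.
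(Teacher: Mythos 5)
Your hard polynomial is essentially the $r=2$ special case of the paper's polynomial $H^{(k,r)}$ (defined by nested inner products of width $r$), and your lower bound plan diverges from the paper's proof. There are two genuine gaps, one in the parameter choice and one in the proof technique, and you yourself flag both but do not resolve either.

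First, the choice $r=2$ cannot possibly yield the claimed $n^{\omega(1)}$ bound. Any counting or rank argument for a monotone formula of product-depth $\Delta$ will, in the end, be of the form ``size $\geq r^{\Omega(\Delta d^{1/\Delta})}$''; with $r=2$ and $\Delta = c\log d$ for a constant $c<1$, the quantity $d^{1/\Delta} = 2^{1/c}$ is a constant, so the bound degenerates to $2^{\Omega(\log d)} = \mathrm{poly}(d) = \mathrm{poly}(n)$ --- polynomial, not superpolynomial. The paper avoids this by choosing $r$ as large as the size budget allows, namely $r \approx n^{1/\log d}$, so that $r^{\Omega(\Delta d^{1/\Delta})} = n^{\Omega(\Delta d^{1/\Delta}/\log d)}$, and the exponent $\Delta d^{1/\Delta}/\log d$ tends to infinity for \emph{every} $\Delta = o(\log d)$. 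You do hint that larger fan-in sums may be needed, but getting the quantitative bound right is precisely the point of the construction and cannot be deferred.

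Second, your lower bound route through partial-derivative rank and set-multilinearization is not the paper's route, and as you note it hits an obstruction: the set-multilinear projection blows up size by $2^{\Theta(d)}$ per gate, which is unaffordable when $d$ is close to $\sqrt{n}$. The paper sidesteps rank entirely and uses a much simpler \emph{monomial-counting} argument. The crux is a polynomial-specific lemma (Lemma~\ref{lem:number-of-monomials}): in any monotone formula computing $H^{(k,r)}$, a gate of degree $d_\alpha$ computes at most $r^{d_\alpha - 1}$ monomials; this is proved by showing that, by monotonicity and the set-multilinear structure of $H$, the monomials at $\alpha$ must be submonomials of $H$ supported on a fixed ``address'' in the tree structure. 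Combining this with the standard product lemma (Lemma~\ref{lem:product-decomposition}), each summand is a product of $\Omega(\Delta d^{1/\Delta})$ factors, each of degree $\geq 1$, so the product of their monomial counts loses a factor $r$ per factor. Since $H$ has $r^{d-1}$ monomials, the size must be $r^{\Omega(\Delta d^{1/\Delta})}$. No rank, no set-multilinearization blow-up. Your ``certificate tree'' fallback gestures in a similar direction (parse trees of the formula) but is too underspecified to carry the argument; the missing ingredient is the monomial-count-per-gate bound, which is where the structure of $H$ and monotonicity are actually used.
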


It should be noted that a well-known result of Gupta, Kamath, Kayal and Saptharishi~\cite{GKKSdepth3} shows how to exploit cancellations to obtain better depth-reduction results. However, in general this does not reduce the depth of a given formula by more than a constant factor without incurring a significant blow-up in size.\footnote{More precisely, the result of~\cite{GKKSdepth3} shows how to convert a low-degree \emph{homogeneous} depth-$4$ formula to an \emph{inhomogeneous} depth-$3$ formula efficiently. In general, this can be used to reduce the depth of a small-depth formula by a multiplicative factor of $2$.} As a result, we believe that the above result is a strong indication that our depth reduction result is tight up to a constant factor in the depth.
\section{Preliminaries}
\label{sec:prelims}

\paragraph{Basic notation.} Throughout, unless otherwise specified, we work with polynomials over a field $\F$. We will work with the multivariate ring of polynomials $\F[x_1,\ldots,x_n]$ or its \emph{non-commutative} analog $\F\langle  x_1,\ldots,x_n \rangle.$ 

\subsection{Algebraic formulas} 
\label{sec:formula-prelims}

We start with some brief definitions and results related to algebraic formulas. For much more about this model, see the standard references~\cite{SY,Ramprasadsurvey}.

\paragraph{The model.} An algebraic formula over the multivariate polynomial ring $\F[x_1,\ldots,x_n]$ is a rooted, directed tree with edges directed towards the root. Leaves are labelled by variables $x_1,\ldots,x_n$ or by the constant \(1\) and edges by non-zero field constants. Internal nodes (i.e.,~gates) by $+$ and $\times$ and compute linear combinations (based on the edge weights) or products of their children. We will assume, with loss of generality, that if a node \(\alpha\) has for child a leaf labelled by \(1\), then \(\alpha\)  is a \(+\)-gate and that if a \(+\)-gate \(\alpha\) has only children labelled by \(1\), then \(\alpha\) is the output of the formula.\footnote{This ensures that a formula can compute polynomials with a constant term but forbids using many arithmetic operations just to compute constants.}
A \emph{non-commutative} algebraic formula  over the multivariate polynomial ring $\F\langle x_1,\ldots,x_n\rangle$ is defined similarly, with the additional assumption that the children of any $\times$-gate are linearly ordered, and the corresponding product is computed in this order.

Unless explicitly stated, the algebraic formulas we consider have \emph{unbounded} fan-in (i.e.,~a gate can have any number of inputs). The \emph{size} of $F$ will denote the number of leaves,\footnote{\label{ft:fanin}This is within a constant factor of the number of gates, as long as each gate has fan-in at least $2$ each (which is without loss of generality).} the \emph{depth} of $F$ the longest leaf-to-root path. The \emph{product-depth} and the \emph{sum-depth} of $F$ are defined to be the maximum number of product gates and sum gates encountered on a leaf-to-root path, respectively.

A \emph{parse tree} of a formula $F$ is a subformula of $F$ which corresponds to the way a monomial is built in the evaluation of $F$.
Parse trees of $F$ can be defined inductively as follows:
\begin{itemize}
\item If $F$ has a top $+$-gate,
a parse tree of $F$ is obtained by taking a parse tree of one of its children together with the corresponding edge to the root of $F$;
\item If $F$ has a top $\times$-gate,
a parse tree of $F$ is obtained by a taking a parse tree of each of its children, together with the incoming edges of the root of $F$;
\item The only parse tree of a leaf is itself.
\end{itemize}
The polynomial computed by a parse tree is a single monomial, which
is equal to the product of the variables labelling its leaves, multiplied
by the product of the scalars labelling its edges.
The polynomial computed by a formula $F$ is easily seen to be the sum of the monomials computed by all its parse trees.

A formula $F$ is called \emph{monotone} if any monomial computed by a parse tree of $F$ has a non-zero coefficient
in the polynomial computed by $F$.

We now recall some well-known results from the literature regarding algebraic formulas. It should be noted that these results (specifically Theorems~\ref{thm:standardDR}, \ref{thm:Raz} and \ref{thm:BCE} later on) are usually proved in the general setting of commutative formulas. However, it is easy to see that the proofs of these results carry over to the monotone, non-commutative setting without significant change.

\paragraph{Depth-reduction.} Classical results~\cite{BKM,Brent} show that any algebraic formula of small size can be simulated by one of small depth and not much larger size. Formally,

\begin{theorem}
\label{thm:standardDR}
 Let $F$ be a (non-commutative or commutative) algebraic formula of size $s$ computing a polynomial $P$. Then there is an algebraic formula $F'$ of size at most $\poly(s)$ and depth $\Delta = O(\log s)$ computing $P$. We may also assume that each gate in $F'$ has fan-in $2$. Furthermore, $F$ is homogeneous and/or monotone, then so is $F'$.
\end{theorem}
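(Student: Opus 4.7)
The plan is to prove this via the classical balanced-vertex decomposition due to Brent. Given the formula $F$ of size $s$, I would first locate a ``balanced'' gate $v$: starting at the root and always descending into the child whose subformula is largest, one reaches a vertex $v$ whose subformula $F_v$ has size in $[s/3,\,2s/3]$. Let $G_v$ denote the polynomial computed at $v$. Because $F$ is a tree, $v$ has a unique path to the output, so if we replace $F_v$ by a fresh variable $y$ and evaluate the resulting formula $F^{(v)}(y)$, the output is an affine function $A + B\cdot y$ of $y$, where $A, B \in \F[x_1,\ldots,x_n]$ do not involve $y$. Explicitly, $A = F^{(v)}(0)$ and $B = F^{(v)}(1) - F^{(v)}(0)$, so both $A$ and $B$ are computed by formulas obtained from $F^{(v)}$ (with $y$ substituted by a constant), each of size at most the size of $F^{(v)}$, which is $s - |F_v| + 1 \le (2s/3) + 1$. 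Thus $P = A + B\cdot G_v$ where the three constituent polynomials $A$, $B$, $G_v$ each have formulas of size at most $2s/3 + O(1)$.

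Next I would recurse: let $F'_A, F'_B, F'_{G_v}$ be depth-reduced formulas for $A$, $B$, $G_v$ obtained by the same procedure applied to the respective formulas above, and set $F' := F'_A + F'_B \cdot F'_{G_v}$. This adds one $+$-gate on top of one $\times$-gate, so the depth recurrence is $\Delta(s) \le \Delta(2s/3 + O(1)) + 2$, giving $\Delta(s) = O(\log s)$. The size recurrence is $T(s) \le 3\, T(2s/3 + O(1)) + O(1)$, which solves to $T(s) = s^{O(1)}$ (roughly $s^{\log_{3/2} 3}$). Reducing to fan-in $2$ is standard: each large-fan-in $+$ or $\times$-gate is replaced by a balanced binary tree of the same operation, which only multiplies the depth by a constant and the size by a constant.

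To preserve homogeneity, I would argue as follows. If $F$ is homogeneous and $G_v$ has degree $e$, then $P = A + B\cdot G_v$ is homogeneous of degree $d$ (the degree of $P$) if we declare $\deg(y)=e$, $\deg(x_i)=1$. Comparing homogeneous components of each side forces $A$ to be homogeneous of degree $d$ and $B$ homogeneous of degree $d-e$. Since $A$ and $B$ are obtained from $F^{(v)}$ by substituting a constant for $y$, which may destroy homogeneity, I would instead extract the appropriate homogeneous component: this is cheap because one can compute, within a formula obtained by substituting $y$ by the scaled variable $t^e \cdot y$ in $F^{(v)}$ and interpolating in $t$, the needed degree-$d$ and degree-$(d-e)$ parts. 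Inductively the subformulas produced remain homogeneous, and the top combination $A + B\cdot G_v$ is visibly homogeneous. Monotonicity is easier: if $F$ has non-negative edge labels, substituting $y \mapsto 0$ or $y \mapsto 1$ leaves all edge labels non-negative, so $A$ and $B$ are still computed by monotone formulas, and $G_v$ is a sub-formula of $F$, hence also monotone; the recursive combination preserves this. The non-commutative case is identical provided we write the product $B \cdot G_v$ in the correct order dictated by the path from $v$ to the root in $F$ (i.e.\ split the linearization of $F^{(v)}$ into a ``left part'' and ``right part'' of $y$, yielding $P = A + B_L \cdot G_v \cdot B_R$); all steps go through verbatim.

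The only genuinely delicate point is the homogeneity-preserving extraction of $A$ and $B$ from $F^{(v)}$, since a naive substitution can create inhomogeneous intermediate polynomials. I expect this to be the main technical step, and it is handled by the standard trick of tracking homogeneous components gate-by-gate (as in Raz's homogenization) at an $O(d^2)$ multiplicative blow-up in size, which remains polynomial. Everything else — the balanced-vertex lemma, the affine decomposition in $y$, the recurrences, and the monotonicity and fan-in-$2$ claims — is routine and goes through without change in both the commutative and non-commutative settings.
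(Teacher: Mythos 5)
Your overall plan---finding a balanced gate $v$, linearizing $F$ in a fresh variable $y$ substituted for $F_v$, and recursing on the pieces---is the Brent/Brent--Kuck--Maruyama strategy the paper cites for this theorem, and it is the same scheme carried out in Appendix~A for the stronger Theorem~\ref{thm:BCE}. However, your definition of $B$ creates a genuine gap for the monotonicity claim. You set $B = F^{(v)}(1) - F^{(v)}(0)$. As a polynomial identity over a field this is fine, but as a formula this requires a subtraction, which is not monotone: if $F$ is monotone then both copies $F^{(v)}(1)$ and $F^{(v)}(0)$ are monotone, but the difference has a $-1$ edge, and monomials of $A = F^{(v)}(0)$ that also occur in $F^{(v)}(1)$ now appear in parse trees of the combined formula with a net coefficient of $0$, violating the paper's definition of monotone. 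Your later sentence ``substituting $y \mapsto 0$ or $y \mapsto 1$ leaves all edge labels non-negative, so $A$ and $B$ are still computed by monotone formulas'' does not address this, since the subtraction, not the substitution, is the problem. Likewise, your proposed fix for homogeneity (substituting $y \mapsto t^e y$ and interpolating in $t$) also relies on cancellation and so again breaks monotonicity; you would be trading one broken invariant for the other. A minor side remark: your size bound ``$A$ and $B$ each of size at most $|F^{(v)}|$'' is also incorrect for the subtraction-based $B$, which costs roughly $2|F^{(v)}|$.

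The fix is exactly what you gesture at in your non-commutative paragraph, and you should use it everywhere: read $B$ (or $B_L$, $B_R$) directly off the path from $v$ to the root. Each $\times$-gate on the path contributes the product of its other children as left- or right-factors; each $+$-gate contributes an additive term. This assembles $B_L$, $B_R$, and the additive remainder entirely from disjoint subformulas of $F^{(v)}$ using only the gates' original operations, so there is no subtraction and no interpolation, and monotonicity is inherited trivially. Homogeneity is also automatic: each factor and each additive term is itself a subformula of a homogeneous formula, hence homogeneous, with the syntactic degrees lining up so that the resulting $A + B_L \cdot G_v \cdot B_R$ is homogeneous of the right degree. The size is bounded by $|F^{(v)}|$ directly, with no factor-of-two overhead. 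This direct decomposition is what the Appendix~A proof of Theorem~\ref{thm:BCE} uses, and you may want to compare against it. One further ordering issue: reduce the fan-in to $2$ \emph{before} running the balanced-vertex search, not afterward---with unbounded fan-in the ``descend into the largest child'' walk can jump past the window $[s/3, 2s/3]$ entirely.
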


\paragraph{Homogeneity.} Each gate in an algebraic formula has a \emph{syntactic degree} defined in a natural way. Leaves labelled by the constant~$1$ have syntactic degree~\(0\),  leaves labelled with a variable have syntactic degree~$1$, $\times$-gates have a syntactic degree that is the sum of the syntactic degrees of their children, and $+$-gates have a syntactic degree that is equal to the largest of the syntactic degrees of their children.
 The syntactic degree of a formula is defined as the syntactic degree of its output. Notice that in a formula  the syntactic degree of any gate is bounded by the syntactic degree of the formula. 

We will further assume that no gate computes the zero polynomial.

A formula is \emph{homogeneous} if each gate in the formula computes a homogeneous polynomial. Equivalently, in terms of syntactic degrees, this means that all the children of a sum gate have the same syntactic degree.

Raz~\cite{Raztensor} showed how to convert a possibly inhomogeneous formula $F$ to a homogeneous formula with a relatively small blow-up in size.

\begin{theorem}
\label{thm:Raz}
 Let $F$ be a (non-commutative or commutative) algebraic formula of size $s$ and product-depth $\Delta$ computing a polynomial $P$ such that all gates in $F$ have fan-in $2$. Then there is a \emph{homogeneous} algebraic formula $F'$ of size at most $O\left(s\cdot \binom{\Delta + d+1}{d}\right)$ and product-depth $\Delta$ computing $P$.  In particular, if $\Delta = O(\log s)$ and $d = O(\log s)$, then the formula $F'$ has size $\poly(s).$
\end{theorem}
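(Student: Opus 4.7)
The plan is to carry out the classical homogenization construction on $F$. For every gate $g$ of $F$ and every degree $i \in \{0,1,\ldots,d\}$, I would introduce in $F'$ a gate $g^{(i)}$ intended to compute the degree-$i$ homogeneous component $[g]_i$ of the polynomial computed at $g$. The recursion is dictated by the type of $g$: at a leaf labelled $x_k$, set $g^{(1)} := x_k$ and all other $g^{(i)} := 0$; at a leaf labelled $1$, set $g^{(0)} := 1$ and all other $g^{(i)} := 0$; at a sum gate $g = c_1 h_1 + c_2 h_2$, set $g^{(i)} := c_1 h_1^{(i)} + c_2 h_2^{(i)}$; and at a product gate $g = h_1 \cdot h_2$, set $g^{(i)} := \sum_{j=0}^i h_1^{(j)} \cdot h_2^{(i-j)}$. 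The output of $F'$ is $g_{\mathrm{out}}^{(d)}$ when $P$ is homogeneous of degree $d$, and $\sum_{i=0}^{d} g_{\mathrm{out}}^{(i)}$ in general.

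Next I would verify correctness and the structural properties. A straightforward induction on $g$ shows $g^{(i)} = [g]_i$, so $F'$ computes $P$. Each $g^{(i)}$ is itself homogeneous of degree exactly $i$, so every sum gate inside $F'$ combines only equi-degree children, giving a homogeneous formula (up to the optional top sum when $P$ is inhomogeneous). Product-depth is preserved: replacing a product gate of $F$ by the gadget $\sum_{j=0}^i h_1^{(j)} h_2^{(i-j)}$ places a $+$-gate on top of products sitting at the same product-level as the original gate, so no root-to-leaf path picks up a new $\times$-gate. The non-commutative and monotone cases come for free --- the order of the two factors is preserved in every product, and the only new coefficients introduced are $+1$, so positivity and the absence of cancellation survive.

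The main technical work is the size bound. I would prove by induction on $g$ that
\[
T_g(i) \le C \cdot s_g \cdot \binom{\Delta_g + i + 1}{i},
\]
where $T_g(i)$ counts the leaves of the sub-formula of $F'$ rooted at $g^{(i)}$, and $s_g, \Delta_g$ denote the size and product-depth of the sub-formula of $F$ at $g$. Leaves and sum gates are immediate. For a product gate $g = h_1 \cdot h_2$, since $F$ has fan-in $2$,
\[
T_g(i) = \sum_{j=0}^{i} T_{h_1}(j) + \sum_{j=0}^{i} T_{h_2}(j),
\]
and the inductive bounds on the two sums then collapse via the hockey-stick identity $\sum_{j=0}^{i} \binom{\Delta_g + j}{j} = \binom{\Delta_g + i + 1}{i}$, which is exactly tailored to absorb the product-depth drop by $1$. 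Setting $g = g_{\mathrm{out}}$ and $i = d$ yields $O(s \binom{\Delta + d + 1}{d})$, and one extra application of hockey-stick handles the top sum in the inhomogeneous case within the same asymptotic bound.

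I expect the only real obstacle to be this size bookkeeping --- specifically, committing to the inductive hypothesis in the precise form above so that hockey-stick telescopes cleanly to the next binomial coefficient, and using the fan-in-$2$ assumption to keep the product-gate step a simple two-way convolution in the degree variable. Everything else (correctness, homogeneity, product-depth preservation, and the monotone/non-commutative guarantees) is immediate from the syntactic form of the construction.
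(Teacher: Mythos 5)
Your proof is correct and is essentially the standard homogenization argument of Raz~\cite{Raztensor}, which the paper cites for Theorem~\ref{thm:Raz} without reproducing: degree-indexed copies $g^{(i)}$ of every gate, a convolution gadget at $\times$-gates, and the hockey-stick identity $\sum_{j=0}^i\binom{\Delta_g+j}{j}=\binom{\Delta_g+i+1}{i}$ to absorb the product-depth drop at each $\times$-gate, together with the observation that the $+$-gate sitting above the convolution products adds no product-depth. The correctness, homogeneity, product-depth, monotone, and non-commutative claims all go through as you describe.

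One small caveat worth flagging: the paper defines a ``homogeneous formula'' as one in which \emph{every} gate computes a homogeneous polynomial, so Theorem~\ref{thm:Raz} implicitly assumes $P$ is itself homogeneous of degree $d$, and your top-level sum $\sum_{i=0}^d g_{\mathrm{out}}^{(i)}$ for inhomogeneous $P$ is not actually needed for the statement. Moreover, the extra hockey-stick application you invoke there yields $\binom{\Delta+d+2}{d}$, which exceeds $\binom{\Delta+d+1}{d}$ by a factor $\frac{\Delta+d+2}{\Delta+2}$ --- not literally ``the same asymptotic bound'' when $\Delta \ll d$, though the $\poly(s)$ consequence is unaffected.
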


\section{Main result}
\label{sec:mainlbd}

\paragraph{Proof Overview.} While the proof of the main result is fairly short and (in our opinion) clean, we add some remarks here to clarify why previous depth-reduction proofs are not applicable in our setting.

The first attempt in proving Theorem~\ref{thm:main-intro} would be to try to use the proof strategy behind Theorem~\ref{thm:standardDR}. Here, we start with a formula $F$ of size $s$ and find a subformula $G$ of size roughly $s/2$ rooted at some gate $\alpha$ of $F$. It is not hard to show that the polynomial computed by $F$ can be written as
\[
F = G\times H_1 + H_2
\]
where $H_1$ and $H_2$ are also computed by formulas of size $s/2$.\footnote{Here, for simplicity, we are assuming that $F$ is a commutative formula. In the non-commutative setting, we would instead get $F = H_1' G H_1'' + H_2.$} We then apply induction to these three subformulas to get the result. Unfortunately, this strategy does not use the degree of the formula at all, and therefore only yields a formula of depth $O(\log s).$

In the homogeneous setting, it is sometimes more natural to do induction on the degree of the underlying formula, in which case $G$ and $H_1$ (in the decomposition above) would be subformulas of degree roughly $d/2$. We get the following recursion on the worst-case size of the depth-reduced version of $F$, which we denote by $T(s,d)$
\[
T(s,d) \leq T(s_1,d/2) + T(s-s_1,d/2) + T(s-s_1,d)
\]
where $s_1$ denotes the size of $G$. Unfortunately, in this case, the formulas $H_1$ and $H_2$ \emph{could} have size nearly $s$, resulting in considerable size blow-up. Indeed, when $s_1$ is much smaller than $s$ (say $s_1 = s^{o(1)}$), the above recursion only yields $T(s,d) = s^{O(\log d)}$, which is a superpolynomial size blow-up.

It may be possible to interleave recursions with respect to size and depth, but we were unable to make this work. 

Another possible strategy could be to follow the work of~\cite{VSBR} which produces \emph{circuits} of the required depth. Unfortunately, the proof of~\cite{VSBR} is a memoization procedure, which seems to yield circuits even when applied to a formula $F$. Turning the resulting circuit of depth $O(\log d)$ into a formula seems to increase the  size to $s^{\Omega(\log d)}$.

The approach we take is somewhat more global than the recursive strategies outlined above. Our first motivating example is seemingly the worst-case example for depth-reduction: a \emph{comb} of depth greater than $d$ with alternating sums and products. More formally, a comb computes the following polynomial (up to identifying variables).
\[
C(x) = x_1 + (x_2 \times (x_3 + ( x_4 \times \cdots )))
\]
Note that the above yields a formula of depth greater than $d$ where $d$ is the degree of the underlying polynomial. However, we observe that such a comb actually computes a polynomial with only a few monomials, and hence can be written trivially as a depth-$2$ $\sum\prod$-formula without much of a size blow-up. 

Building on this observation, the overall strategy is to decompose the formula into a top part $G$, which is a (generalized) comb, whose leaves are subformulas of $F$ to which we will apply the same procedure recursively. We then write $G$ as a $\sum\prod$-formula, and replace its leaves with the depth-reduced versions of the subformulas. This gives the depth-reduced version of $F$.

The correct definition of $G$ is crucial, and somewhat subtle (at least to us), but with the proper definitions in place, the proof goes through without much trouble.

\subsection{Proof of Theorem~\ref{thm:main-intro}}

In this section we prove our main result (Theorem~\ref{thm:main-intro}). 
We start by showing a simple depth-reduction result for the case of \emph{skew formulas}. A formula is said to be skew if every multiplication gate in it has at most $1$ non-trivial child (i.e., a non-leaf node).

We will say that a leaf in a skew-formula is a $+$-leaf if the parent of that leaf is a $+$ gate and a $\times$-leaf otherwise. 
We show that any skew formula can be converted efficiently into a depth-$2$ formula, i.e., a $\sum\prod$-formula.

\begin{lemma}
	\label{lem:skew}
	Let $G$ be a skew formula with sum-depth $\delta$, wherein all the gates have fan-in $2$ and the leaves are labelled by distinct variables. Then the polynomial computed by $G$ is a multilinear polynomial with at most $2^\delta$ monomials. Moreover any variable labelling in \(G\) 
	\begin{itemize}
		\item a $+$-leaf,
		\item or a \(\times\)-leaf whose sibling is a leaf
	\end{itemize}
	appears in exactly one monomial. We will call them the {\em non-duplicable} variables. 
\end{lemma}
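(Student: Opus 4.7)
The plan is to prove both assertions by structural induction on the skew formula $G$, exploiting the key rigidity imposed by skewness: at every $\times$-gate one of the two children is already a leaf, so there is no additional choice to make on that side. For the multilinearity and the monomial-count bound, I would first observe that since every leaf of $G$ carries a distinct variable, any parse tree of $G$ (being a sub-tree that uses each leaf at most once) contributes a monomial in which no variable has degree exceeding $1$; summing over parse trees preserves multilinearity. For the count, let $N(H)$ denote the number of parse trees of a skew formula $H$. By induction on the size of $H$: if $H$ is a leaf, $N(H) = 1 = 2^0$; if the root is a $+$-gate with children $H_1, H_2$ of sum-depths $\delta_1, \delta_2 \leq \delta(H) - 1$, then $N(H) = N(H_1) + N(H_2) \leq 2^{\delta_1} + 2^{\delta_2} \leq 2^{\delta(H)}$; and if the root is a $\times$-gate, skewness forces one child to be a leaf and the other a skew subformula $H'$ with $\delta(H') = \delta(H)$, so $N(H) = N(H') \leq 2^{\delta(H)}$. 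Since the polynomial computed by $G$ has at most $N(G)$ monomials, the claimed $2^{\delta}$ bound follows.

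For the non-duplicability claim, the crux is to show that for a leaf $y$ satisfying either of the two hypotheses there is a \emph{unique} parse tree $T_y$ of $G$ containing $y$. Consider the path $y = v_0, v_1, \ldots, v_k$ from $y$ to the root. Any parse tree containing $y$ must include every $v_i$. At each $+$-ancestor $v_i$, the parse tree is forced to pick the child $v_{i-1}$ lying on the path, since choosing the sibling would exclude $y$. At each $\times$-ancestor $v_i$ with $i \geq 2$, both children of $v_i$ must appear in the parse tree; since $v_{i-1}$ is a non-leaf, skewness forces the sibling of $v_{i-1}$ to be a leaf, which contributes only itself. The hypothesis on $y$ takes care of the parent $v_1$: either $v_1$ is a $+$-gate (and we simply pick $y$), or $v_1$ is a $\times$-gate whose other child is a leaf (and contributes itself). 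In every case no further choice remains, so $T_y$ is uniquely determined. Its monomial is a product of distinct leaf labels (including $y$) times a non-zero scalar (a product of non-zero edge weights); since $y$ labels only one leaf of $G$, no other parse tree produces a monomial containing $y$, so this is the only monomial of the polynomial computed by $G$ in which $y$ appears.

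The step I expect to be most delicate is the calibration of the case analysis at $y$'s parent $v_1$: had $y$ instead been a $\times$-leaf with a non-leaf sibling $G''$, parse trees through $y$ would be parameterised by parse trees of $G''$, destroying the uniqueness of $T_y$ and in general placing $y$ in many monomials. The lemma's two cases are tuned precisely to avoid this, and together with the skewness of all higher $\times$-ancestors they yield the single-parse-tree structure on which the non-duplicability argument hinges.
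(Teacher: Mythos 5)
Your proof is correct. The paper's main argument is a structural induction on the depth of $G$: at a $+$-gate, the two subformulas have disjoint variable sets and sum-depth at most $\delta-1$, so the monomial sets are disjoint and the count doubles; at a $\times$-gate, skewness forces the form $x \times G_1$ (or $G_1 \times x$), and distributing $x$ over the $\leq 2^\delta$ monomials of $G_1$ preserves both the bound and non-duplicability. Your parse-tree account is exactly the alternative the paper sketches in a closing parenthetical: parse trees of a skew formula are combs with a binary choice at each $+$-gate, giving at most $2^\delta$ of them, and a non-duplicable leaf pins down a unique parse tree. The only thing worth making explicit, which you leave implicit, is why the coefficient of $T_y$'s monomial cannot vanish: since distinct leaves carry distinct variables, two distinct parse trees yield distinct monomials, so no cancellation occurs and $T_y$ contributes a genuine (nonzero-coefficient) monomial containing $y$. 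Your comment on the role of the two hypotheses on $y$ — that a $\times$-leaf with a non-leaf sibling would let parse trees through $y$ branch — is exactly the right intuition for why these are the correct ``non-duplicable'' conditions.
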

\begin{proof}
	We prove this by induction on the depth of the formula. The base case is when the depth is $0$ or \(1\).
	In both cases, the statement trivially holds. 
	
	For the induction case, assume that the depth is at least \(2\). 
	Suppose $G = G_1 + G_2$. The sum-depths of $G_1$ and $G_2$ are at most $\delta-1$. Let $f_1,f_2$ be the multilinear polynomials computed by $G_1, G_2$, respectively. We know that the leaves of $G$ are labelled with distinct variables. Hence, $G_1, G_2$ have the same property and the variable sets labelling the leaves of $G_1$ and $G_2$ are disjoint. 
	The depths of $G_1$ and $G_2$ are strictly smaller than the depth of $G$. By the induction hypothesis we have that $f_1$ and $f_2$ have at most $2^{\delta-1}$ monomials.  Moreover, any non-duplicable variable in $G_1$ appears in at most one monomial in $f_1$. Similarly, any non-duplicable variable in $G_2$ appears in at most one monomial in $f_2$. Hence, the polynomial computed by $G$, i.e., $f_1+f_2$,  has at most $2^\delta$ monomials and each non-duplicable variable appears in at most one monomial in it. 
	
	Suppose the top gate of $G$ is a $\times$ gate. As $G$ is a skew formula it is either  $x \times G_1$ or $G_1 \times x$, where $x$ is a variable. In particular since the depth of \(G\) is at least two, the variable \(x\) is duplicable. Let $f_1$ be the multilinear polynomial computed by $G_1$. By our assumption, the variable $x$ does not appear in $f_1$. The depth of $G_1$ is strictly smaller than $G$.  By the induction hypothesis we have that $f_1$ has at most $2^\delta$ monomials and any non-duplicable variable in $G_1$ appears in at most one monomial. As $x$ can distribute over the monomials of $f_1$, we have that the polynomial computed by $G$ is multilinear with at most $2^\delta$ monomials and any non-duplicable variable appears in at most one monomial.  
	
	(Note that this can also be seen using parse trees, since the polynomial computed is the sum of the monomials computed by the parse trees. The multilinearity is obvious. Note that parse trees do not really ``branch" at multiplication gates because of the skewness and are therefore combs.  To build a parse tree starting from the root we will have two choices for each addition gate we encounter on the path, and there are at most $\delta$, so we get at most $2^\delta$ parse trees. The only parse tree containing a given non-duplicable variable is defined by the path from the root to this leaf.)
\end{proof}

If a formula is homogeneous, it implies that for any gate \(\alpha\), the degree of the polynomial computed by \(\alpha\) coincides with the syntactic degree \(d_{\alpha}\) of \(\alpha\). Based on this remark, below we prove a stronger statement than Theorem~\ref{thm:main-intro}. Specifically, Theorem~\ref{thm:main-intro} is stated for homogeneous formulas. But here, we show a depth-reduction for formulas for which the syntactic degree is small. 

\begin{theorem}[Refinement of  Theorem~\ref{thm:main-intro}]
	\label{thm:main-syntactic-degree}
	Let $F$ be an algebraic formula of size $s$ and of syntactic degree \(d_F \geq 2\). Then $P$ is also computed by a formula $F'$ of size $\poly(s)$ and depth $O(\log d_F)$. Moreover, if $F$ is homogeneous, monotone and/or non-commutative, then so is $F'$.
\end{theorem}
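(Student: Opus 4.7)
The plan is to proceed by strong induction on the syntactic degree $d = d_F$. The base case $d \le 2$ is handled by a direct depth-$2$ $\sum\prod$ expansion of $F$, which has size $\poly(s)$. For the inductive step on $F$ of size $s$ and syntactic degree $d \ge 3$, I would decompose $F$ into a ``top part'' $G$ and cut-off subformulas. Concretely, let $G$ be the subformula of $F$ consisting of all gates whose syntactic degree strictly exceeds $d/2$: since the root has syntactic degree $d > d/2$, $G$ is a connected subtree of $F$ containing the root. Let $F_1, \ldots, F_m$ be the maximal subformulas of $F$ rooted at children of $G$-gates whose syntactic degree is at most $d/2$. These $F_i$'s are pairwise disjoint subformulas of $F$, so $\sum_i |F_i| \le s$, and each has syntactic degree at most $d/2$.

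The crucial structural property of this decomposition is that every $\times$-gate in $G$ has at most one child in $G$: if two children both had syntactic degree exceeding $d/2$, their sum would exceed $d$, contradicting the bound $d$ on the parent. Treating each $F_i$ as a formal leaf variable $Y_i$, $G$ is therefore a skew formula in the sense of Lemma~\ref{lem:skew}. After a careful order-preserving binarization of $G$ that preserves both the skewness at $\times$-gates and the non-commutativity of the formula (for instance, isolating the unique non-leaf child of each skew $\times$-gate and chaining the remaining leaf siblings around it), I would apply Lemma~\ref{lem:skew} to express $G$ as a depth-$2$ $\sum\prod$-formula in the $Y_i$ and the variables of $F$ appearing at leaves of $G$. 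By the inductive hypothesis, each $F_i$ admits a depth-reduced formula $F_i'$ of depth $O(\log(d/2))$ and size polynomial in $|F_i|$. Substituting $F_i'$ for $Y_i$ in the $\sum\prod$-expansion of $G$ yields $F'$, whose depth is $2 + O(\log(d/2)) = O(\log d)$, and which preserves homogeneity, monotonicity, and non-commutativity by construction.

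The main obstacle, flagged in the proof overview as ``somewhat subtle'', is the size analysis: a naive bound on the number of monomials of $G$ (potentially as large as the number of leaves of $G$, hence up to $s$) only gives a quasi-polynomial blow-up across the $O(\log d)$ recursion levels. The saving ingredient I would exploit is a disjointness property: within any single parse tree of the $\sum\prod$-expansion of $G$, the cut-off subformulas at its leaves correspond to pairwise disjoint subformulas of $F$ (because $F$ is a tree and distinct leaves of $G$ sit in different branches of $F$), so their sizes sum to at most $s$. Combined with the superadditivity of polynomial depth-reduction bounds ($\sum_i |F_i|^c \le (\sum_i |F_i|)^c$ for $c \ge 1$) and the non-duplicability statement of Lemma~\ref{lem:skew} (which pins down exactly which leaves can be replicated across monomials and which cannot), this disjointness should let us conclude $|F'| = \poly(s)$. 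Getting this accounting right -- in particular choosing the exact definition of $G$, the binarization strategy, and exploiting the interplay between duplicability and the structure of the $F_i$'s so that the per-level blow-up closes the recursion -- is the heart of the argument.
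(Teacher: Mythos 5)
Your decomposition---defining $G$ as the connected subtree of gates whose syntactic degree strictly exceeds $d/2$---has the right shape: it produces a skew top piece to which Lemma~\ref{lem:skew} applies, the $F_i$'s hanging below it are pairwise disjoint subformulas of degree at most $d/2$, and the skewness argument for $\times$-gates is correct. But the size analysis has a genuine gap, and disjointness plus the inequality $\sum_i |F_i|^c \le (\sum_i |F_i|)^c$ does not close it. The problem is the \emph{duplicable} leaves of $G$ interacting with an uncontrolled sum-depth. Even after first balancing $F$ to sum-depth $O(\log s)$ via Theorem~\ref{thm:standardDR} (a step you need and do not mention; without it $G$ could have $2^{\Omega(s)}$ parse trees), your $G$ can still have sum-depth $\Theta(\log s)$: nothing in your cut rule bounds the number of $+$-gates on a root-to-leaf path in $G$, since syntactic degree can stay near $d$ through arbitrarily many additions. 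Lemma~\ref{lem:skew} then gives $G$ up to $\poly(s)$ monomials, and a duplicable leaf ($\times$-leaf with a non-leaf sibling) is copied into each of them. A single duplicable $F_i$ can have size $\Theta(s)$, and disjointness only bounds the contribution \emph{within one parse tree} of $G$; summing over all parse trees multiplies the total size by a $\poly(s)$ factor at each of the $O(\log d)$ recursion levels, giving $s^{O(\log d)}$ rather than $\poly(s)$. This is precisely the failed recursion that the paper flags in its proof overview.

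The missing idea is that the cut must budget sum-depth, not just degree. The paper defines a two-parameter potential $\phi_\delta(\alpha) = \lceil \log d_\alpha \rceil + \lceil \Delta(F_\alpha)/\delta \rceil$ ($\Delta$ denoting sum-depth) and places in the cut set $\mathcal{A}$ exactly those gates at which $\phi_\delta$ first drops below the root's value. This ensures simultaneously that $G$ has sum-depth at most $\delta$, hence at most $2^\delta$ parse trees, \emph{and} that every duplicable leaf $\alpha$ has $d_\alpha \le d/2$, so its inductive size contribution $s_\alpha \cdot 2^{\delta \log d_\alpha} \le s_\alpha \cdot 2^{\delta(\log d - 1)}$ absorbs the factor $2^\delta$ from duplication; the running size bound $s \cdot 2^{\delta\log d}$ is then preserved exactly. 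The total number of recursion levels is $\phi_\delta(F) = \lceil \log d \rceil + \lceil \Delta/\delta \rceil$, and choosing $\delta \approx \lceil \log s / \log d \rceil$ makes this $O(\log d)$ while keeping $s \cdot 2^{\delta\log d} = \poly(s)$. Recursing on degree alone cannot produce this trade-off because one parameter is asked to control two independent quantities: the per-level parse-tree count of $G$ and the number of levels.
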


\begin{proof}
	Let us start with a formula obtained from $F$ after applying Theorem~\ref{thm:standardDR}. That is, we will assume that we have a formula of size $\poly(s)$ such each gate in it has fan-in~$2$, sum-depth and product-depth are bounded by $O(\log s)$. For notational simplicity, from now on, $F$ will refer to this new formula.

	Let \(\delta\) be a positive integer. For a formula $G$ of syntactic degree $d_G \geq 1$ and sum-depth $\Delta(G)$ we define a potential function $\phi_\delta(G)$ as follows.
	\begin{equation*}
		\left\{\begin{aligned}
			\phi_{\delta,1} (G) & = \lceil \log(d_G) \rceil \\
			\phi_{\delta,2} (G) & =  \lceil \Delta(G) / \delta \rceil
		\end{aligned}\right.
	\end{equation*}
	and let $$\phi_{\delta}(G)= \phi_{\delta,1} (G) + \phi_{\delta,2} (G).$$
	
	We will show that the potential function bounds the depth of the depth-reduced formula that we will construct. We will also use it to bound the size of the resulting formula. Specifically, we prove the following lemma. 
	
	\begin{lemma}
		\label{lem:main-lemma}
		Let \(\delta\) be a positive integer.
		Any formula $F$ of fan-in \(2\), syntactic degree $d \geq 1$, sum-depth $\Delta$, and size \(s\) can be parallelized into a formula $F'$ (of arbitrary fan-in) of product-depth at most \(\phi_{\delta}(F)\)  and size at most
		\(s\cdot 2^{\delta \log(d)}\).
		Further, if $F$ is homogeneous, monotone and/or non-commutative, so is $F'$.
	\end{lemma}
	
	Since we ensured that the sum-depth is bounded by \(O(\log s)\), taking
	\(\delta=\left\lceil \frac{\log s}{\log d} \right\rceil \)
	and applying Lemma~\ref{lem:main-lemma}, we get that the final formula $F'$ has size at most \(\poly(s)\) and product-depth at most 
	\[
	\phi_{\delta}(F) = O\left( \lceil \log d \rceil+ \left\lceil \log(s) \frac{\log d}{\log s} \right\rceil \right) =  O(\log d).
	\]
	By collapsing sum gates that feed into other sum gates, we see that the depth of $F'$ can be assumed to be at most twice its product-depth, which is $O(\log d)$. This thus finishes the proof of the theorem. \end{proof}

We now prove Lemma~\ref{lem:main-lemma}. 
\begin{proof}[Proof of Lemma~\ref{lem:main-lemma}]
	For any gate $\alpha$ in $F$, let $F_\alpha$ denote the subformula rooted at $\alpha$ and \(d_{\alpha}\) be its syntactic degree. We do the proof by induction on \(\phi_{\delta}(F)\).
	
	The base case $\phi_\delta(F) = 0$ trivially holds. 
	Consider the following set of gates of~$F$: 
	\[ \mathcal{A} = \left\{\alpha \mid \alpha \text{ is a not a leaf labelled~$1$  and }\phi_{\delta}(F_\alpha)  < \phi_{\delta}(F) = \phi_{\delta}(F_{\text{parent}(\alpha)})\right\}.\] 
	For any gate \(\alpha\) in \(\mathcal{A}\), the induction hypothesis tells us that we can construct a formula $F_\alpha'$ of product-depth at most \(\phi_{\delta}(F)-1\)  and size at most \(s_{\alpha}\cdot 2^{\delta \log(d_\alpha)}\) computing the same polynomial as $F_\alpha.$
	
	Let us consider the formula \(G\) obtained by replacing these gates from $\mathcal{A}$ in \(F\)  by leaves (labelled with distinct variables). Notice that for a product-gate \(\beta\) in $F$, at most one of its children has syntactic degree larger than \(d_{\beta}/2\), where $d_\beta$ is the syntactic degree of $\beta$. Consequently, \(G\) is a skew formula.  The other child of $\beta$ is a $\times$-leaf in $G$. Moreover, \(G\) has sum-depth at most \(\delta\) (since \(\phi_{\delta,2}\) strictly decreases for gates below). 
	
	Hence, we can use Lemma~\ref{lem:skew} to simplify $G$: we get that the polynomial computed by $G$ is a multilinear polynomial in its leaves and has $2^\delta$ monomials. We can then  write $G$ as a \(\sum\prod\)-formula $G'$    such that each duplicable gate appears in  at most $2^\delta$ monomials and each non-duplicable gate in at most~$1$.
	
	The new formula $F'$ for $F$ is obtained from $G'$ by replacing each variable leaf, which corresponds to some gate $\alpha$ in $F$, by its depth-reduced version $F'_\alpha$ constructed using the induction hypothesis above.

	The product-depth of $F'$ is bounded by the product-depth of the gates $\alpha \in \mathcal{A}$ plus the product-depth of $G$, which is equal to $1$ after rewriting it as a $\sum\prod$-formula. That is, the product-depth is at most  \((\phi_{\delta}(F)-1)+1 = \phi_{\delta}(F)\). By construction if \(G\) does not contain  leaves labelled by~$1$, then it is also the case for \(G'\), otherwise, \(G'\) still has at most one such leaf.
	The size of the resulting formula is bounded by
	\[ 	\sum_{\alpha\ \text{non-duplicable }} \left(s_\alpha \cdot 2^{\delta \log(d_\alpha)} \right)  
	+
	\left( 2^{\delta} \cdot \sum_{\alpha\ \text{duplicable}} \left(s_\alpha \cdot 2^{\delta \log(d_\alpha)} \right) \right) 
	+ \mathbbm{1}_{\text{G' has a constant leaf}} . \]
	Notice that if $\alpha$ is duplicable, it must be a \(\times\)-leaf with its sibling $\beta$ not a leaf. This
	means that $d_\alpha \leq d_\beta \leq d$ since the syntactic degree is maximal at the root. Hence $d_\alpha \leq d/2$
	so the contribution of duplicable gates is bounded by
	$$\sum_{\alpha\ \text{duplicable }} 2^\delta \left(s_\alpha \cdot 2^{\delta \left(\log(d)-1\right)} \right)
	\leq \sum_{\alpha\ \text{duplicable }} \left(s_\alpha \cdot 2^{\delta \log(d)} \right).$$
	The
	contribution of non-duplicable gates is bounded by
	$$\sum_{\alpha\ \text{non-duplicable }} s_\alpha \cdot 2^{\delta \log d}$$
	since any gate in $F$ has syntactic degree at most $d$.
	By the choice of $\mathcal{A}$, the subformulas $F_\alpha$ are disjoint
	so $\sum_{\alpha \in \mathcal{A}} s_\alpha \leq s$, with strict inequality if $G'$ has a constant leaf.
	Hence
	the size of $F'$ is bounded by $s\cdot 2^{\delta \log(d)}$.
	
	Finally, it is straightforward to verify that the construction preserves homogeneity, monotonicity and/or non-commutativity.
\end{proof}

\begin{remark}
	\label{rmk:degree-not-increase}
	It is easy to note that our depth reduction procedure does not increase the syntactic degree of the formula. 
\end{remark} 

We also observe that putting Theorem~\ref{thm:main-intro} together with Theorems~\ref{thm:standardDR} and~\ref{thm:Raz} immediately implies Corollary~\ref{cor:inhom}.

\begin{proof}[Proof of Corollary~\ref{cor:inhom}]
	Given a (possibly inhomogeneous) formula $F$ of size $s = \poly(n)$ computing a polynomial of degree $d = O(\log n),$ we first apply standard depth-reduction (Theorem~\ref{thm:standardDR}) to get an equivalent formula $F_1$ of size $s_1 = \poly(n)$ and depth $\Delta_1 = O(\log n)$ where each gate has fan-in $2$. Applying Raz's homogenization theorem (Theorem~\ref{thm:Raz}) to $F_1$ yields an equivalent \emph{homogeneous} formula $F_2$ of size $s_2 = \poly(n)$ and depth $\Delta_2 = O(\log n)$. We can now apply Theorem~\ref{thm:main-intro} to $F_2$ to get an equivalent formula $F'$ of size $s' = \poly(n)$ and depth $\Delta' = O(\log d).$
\end{proof}

\subsection{Reducing the size blow-up}
\label{sec:nearlinear}

We note that the above strategy can be easily adapted to yield a small depth formula of size $s'$ that is nearly linear in the size $s$ of the original formula, at the expense of increasing the depth by a large constant. 

The proof is nearly identical to the proof of Theorem~\ref{thm:main-intro} above. The new ingredient is a near-linear depth-reduction in the setting where there is no bound assumed on the degree of the above formula. More precisely, Bshouty, Cleve and Eberly~\cite{BCE} (see also the work of Bonet and Buss~\cite{BonetBuss}) showed the following (we sketch Bonet and Buss' proof in Appendix~\ref{sec:BonetBuss} for completeness).

\begin{theorem}[Depth-reduction with near-linear size]
	\label{thm:BCE}
	The following holds for any $\varepsilon > 0.$ Let $F$ be a (non-commutative or commutative) algebraic formula of size $s$ computing a polynomial $P$. Then there is an algebraic formula $F'$ of size at most $s^{1+\varepsilon}$ and depth $\Delta = 2^{O(1/\varepsilon)}\cdot \log s$ computing $P$. We may also assume that each gate in $F'$ has fan-in $2$. Furthermore, if $F$ is homogeneous and/or monotone, then so is $F'$.
\end{theorem}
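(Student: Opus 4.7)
The plan is to adapt the classical formula-balancing argument of Brent underlying Theorem~\ref{thm:standardDR}, tuning its balancing threshold so as to trade a larger multiplicative constant in the depth for a smaller polynomial blow-up in size. As preprocessing, we may assume (via Theorem~\ref{thm:standardDR} or an elementary restructuring) that every gate of $F$ has fan-in $2$. The workhorse is the standard decomposition: for any non-root gate $\alpha$ of $F$, we can write $F = A \cdot F_\alpha + B$ in the commutative case, and $F = A_1 F_\alpha A_2 + B$ in the non-commutative case, where the sub-formulas $A, A_1, A_2, B$ are obtained by following the unique path from the root of $F$ down to $\alpha$, and each has size at most $s - |F_\alpha|$.

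Given $\varepsilon > 0$, I would fix a parameter $\eta = \eta(\varepsilon) \in (0, 1/2)$ and a target threshold $\tau = (1-\eta)s$. Walking down from the root of $F$ to the larger child whenever the current subformula has size strictly greater than $\tau$ produces a gate $\alpha$ with $|F_\alpha| \in (\tau/2, \tau]$, in particular with $|F_\alpha| \leq (1-\eta)s$. Applying the decomposition at $\alpha$ and recursively depth-reducing $A$, $B$, and $F_\alpha$ gives a formula for $F$ whose depth satisfies the recurrence $D(s) \leq D((1-\eta)s) + O(1)$. This solves to $D(s) = O(\log s / \eta)$, and choosing $\eta = 2^{-\Theta(1/\varepsilon)}$ yields the depth bound $2^{O(1/\varepsilon)} \log s$ promised by the theorem.

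The main obstacle will be the size analysis. The naive recurrence $S(s) \leq S(|A|) + S(|B|) + S(|F_\alpha|) + O(1)$ under the ansatz $S(s) \leq s^{1+\varepsilon}$ reduces to an inequality of the form $a^{1+\varepsilon} + 2(1-a)^{1+\varepsilon} \leq 1$, which must hold uniformly in $a = |F_\alpha|/s$ throughout the range $(\tau/(2s), \tau/s]$. This inequality does hold comfortably at the upper endpoint $a \approx 1-\eta$ for our choice of $\eta$, but it fails at the lower endpoint $a \approx 1/2$ for any $\varepsilon$ below $\log_2(3/2)$. The fix, following the arguments of~\cite{BCE, BonetBuss}, is to refine the balancing step: either iterate the walk-down within a single recursive call until we land on a separator $\alpha$ whose size is \emph{genuinely} close to $\tau$ rather than merely above $\tau/2$, or else amortize the size blow-up across a chain of nested separators along a single root-to-leaf path so that several decomposition levels are handled simultaneously. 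Either refinement yields a size recurrence dominated by $S((1-\eta)s) + 2 S(\eta s)$, which one verifies solves to $S(s) \leq s^{1+\varepsilon}$ using the inequality $(1-\eta)^{1+\varepsilon} + 2\eta^{1+\varepsilon} \leq 1$, valid for $\eta$ small enough.

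Finally, a routine inspection of the construction confirms that fan-in $2$, homogeneity, monotonicity, and the non-commutative variant of the decomposition are all preserved at each step, yielding the remaining clauses of the theorem.
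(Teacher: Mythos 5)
Your high-level plan — tune Brent's balancing threshold and trade a $2^{O(1/\varepsilon)}$ factor in depth for a $s^\varepsilon$ factor in size — is exactly right, and you have correctly located where the naive analysis breaks: with a single separator $\alpha$ chosen by walking toward the larger child until the subformula first drops below $\tau = (1-\eta)s$, the gate $\alpha$ may have $|F_\alpha| \approx \tau/2 \approx s/2$, at which point all four pieces $A, B, C, F_\alpha$ have size about $s/2$ and the ansatz $S(s) \le s^{1+\varepsilon}$ fails for small $\varepsilon$. This is the genuine crux, and you name it accurately.

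The gap is in the fix. Your first option — ``iterate the walk-down within a single recursive call until we land on a separator $\alpha$ whose size is genuinely close to $\tau$'' — does not make sense as an operation: the walk-down produces a uniquely determined node, and subformula sizes only decrease along any further descent, so one cannot re-walk toward a \emph{larger} separator. Your second option (amortizing across a chain of separators) gestures toward the Bshouty--Cleve--Eberly viewpoint, but is left too vague to verify, and in particular it is not clear how it delivers the specific recurrence $S((1-\eta)s)+2S(\eta s)$ you assert. The concrete ingredient you are missing is the two-level trick of Bonet and Buss, which the paper uses in Appendix~\ref{sec:BonetBuss}: do not extract at the gate the walk-down lands on, but at its \emph{parent} $\alpha^{+}$. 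By construction $|F_{\alpha^{+}}| > (1-\eta)s$ while both of its children $\beta,\gamma$ have $|F_\beta|,|F_\gamma| \le (1-\eta)s$ (the walk went to the larger child, so the sibling is no bigger). Replacing $F_{\alpha^+}$ by a fresh variable $y$ leaves $F_y$ of size at most $\eta s + 1$, so \emph{all three} quotient pieces $A,B,C$ in $F_y = A\,y\,B + C$ are small of size $O(\eta s)$; the remaining recursion is on $F_\beta$ and $F_\gamma$, whose sizes sum to at most $s$ and are each at most $(1-\eta)s$. Convexity of $x\mapsto x^{1+\varepsilon}$ then shows the worst case is $(|F_\beta|,|F_\gamma|)\approx((1-\eta)s,\eta s)$, giving a recurrence of the form $T(s)\le T((1-\eta)s) + T(\eta s) + 3T(2\eta s)$. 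This is of the same flavor as the $S((1-\eta)s)+2S(\eta s)$ you wrote, and your closing analysis (the inequality $(1-\eta)^{1+\varepsilon}+O(1)\cdot\eta^{1+\varepsilon}\le 1$ for $\eta = 2^{-\Theta(1/\varepsilon)}$, and the depth recurrence $D(s)\le D((1-\eta)s)+O(1)$) then goes through as you say. In short: right road map, right obstacle, but the one step that actually resolves the obstacle — looking one level above the walk-down separator so that the quotient $F_y$ is tiny — is absent from your write-up.
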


Using the above result, we can prove the following improved version of our depth-reduction.  

\begin{theorem}
	\label{thm:strongermain}
	Assume that $F$ is a (commutative or non-commutative) formula of size $s$ and syntactic degree $d \geq 1$ computing a polynomial $P$. Then $P$ is also computed by a formula $F''$ of size at most $s^{1+\varepsilon}$ and depth $\Delta = 2^{O(1/\varepsilon)}\cdot \log d.$ Furthermore, if $F$ is a homogeneous and/or monotone formula, then so is $F''$.
\end{theorem}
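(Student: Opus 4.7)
The strategy is to follow exactly the template of the proof of Theorem~\ref{thm:main-syntactic-degree}, but to swap in the near-linear depth-reduction of Bshouty--Cleve--Eberly (Theorem~\ref{thm:BCE}) in place of Theorem~\ref{thm:standardDR} as the preprocessing step. The point of Lemma~\ref{lem:main-lemma} is that its size bound $s \cdot 2^{\delta \log d}$ and product-depth bound $\phi_\delta = \lceil \log d \rceil + \lceil \Delta/\delta \rceil$ are parametric in $\delta$, so they are flexible enough to absorb any polynomial (or near-linear) blow-up that has already occurred during preprocessing.

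Concretely, I would first apply Theorem~\ref{thm:BCE} to $F$ with parameter $\varepsilon' = \varepsilon/c$ for a suitable absolute constant $c$, obtaining an equivalent formula $F_1$ of size at most $s^{1+\varepsilon'}$, fan-in $2$ at every gate, and total depth (hence also sum-depth) at most $\Delta_1 = 2^{O(1/\varepsilon)}\log s$. This step preserves homogeneity, monotonicity and non-commutativity. I would then apply Lemma~\ref{lem:main-lemma} to $F_1$ with the balancing choice $\delta = \lceil \varepsilon' \log s / \log d\rceil$, chosen so that $2^{\delta \log d}$ matches the $s^{\varepsilon'}$ slack still available in the size budget and so that $\Delta_1/\delta$ matches $\log d$ up to factors depending only on $\varepsilon$. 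The resulting formula $F''$ then has size
\[
|F_1| \cdot 2^{\delta \log d} \leq s^{1+\varepsilon'} \cdot s^{O(\varepsilon')} \leq s^{1+\varepsilon}
\]
(once $c$ is fixed large enough), and product-depth
\[
\phi_\delta(F_1) = \lceil \log d \rceil + \lceil \Delta_1/\delta \rceil \leq \log d + \frac{2^{O(1/\varepsilon)}\log s}{\varepsilon'\log s/\log d} = 2^{O(1/\varepsilon)} \log d.
\]
As in Theorem~\ref{thm:main-syntactic-degree}, collapsing chains of sum gates at most doubles the depth, so the final depth is $2^{O(1/\varepsilon)}\log d$.

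There is a mild case distinction worth noting: when $d > s^{\varepsilon'}$, we have $\log s = O((1/\varepsilon)\log d)$, so the formula $F_1$ output by Theorem~\ref{thm:BCE} already has depth $2^{O(1/\varepsilon)}\log s = 2^{O(1/\varepsilon)}\log d$ and size $s^{1+\varepsilon'} \leq s^{1+\varepsilon}$, and no further work is required. This edge case can either be handled separately or absorbed by replacing $\delta$ with $\max(1,\lfloor \varepsilon'\log s/\log d\rfloor)$ in the invocation of the lemma. Finally, since both Theorem~\ref{thm:BCE} and Lemma~\ref{lem:main-lemma} preserve homogeneity, monotonicity and non-commutativity, $F''$ inherits each of these properties from $F$.

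I do not expect any conceptual obstacle: the whole proof is a parameter-balancing exercise on top of machinery already developed in Section~\ref{sec:mainlbd}. The only subtlety is ensuring that the $2^{O(1/\varepsilon)}$ factor coming from Theorem~\ref{thm:BCE} does not get amplified when divided by $\delta$ — which it does not, because $\delta$ depends on $\varepsilon$ only through the factor $\varepsilon'$ in the numerator, and $2^{O(1/\varepsilon)}/\varepsilon = 2^{O(1/\varepsilon)}$.
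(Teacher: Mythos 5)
Your proposal matches the paper's proof: preprocess with the near-linear depth reduction (Theorem~\ref{thm:BCE}), then apply Lemma~\ref{lem:main-lemma} with $\delta$ chosen to balance the $2^{\delta\log d}$ size factor against the residual $s^{O(\varepsilon')}$ budget, and dispatch the large-degree regime separately by observing that Theorem~\ref{thm:BCE} alone already suffices there since $\log s = O((1/\varepsilon)\log d)$. One small caveat: your parenthetical alternative of replacing $\delta$ with $\max(1,\lfloor\varepsilon'\log s/\log d\rfloor)$ does not actually absorb the edge case --- with $\delta=1$ and $d>s^{\varepsilon'}$ the factor $2^{\delta\log d}=d$ can be as large as $s$, pushing the size past $s^{1+\varepsilon}$ --- so the explicit case split is needed, exactly as the paper does with its threshold $d^{4/\varepsilon}\geq s$.
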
	

\begin{proof}
	If \(d^{4/\varepsilon} \geq s\), then the depth-reduction of Theorem~\ref{thm:BCE} already gives a satisfactory solution. Indeed the size is bounded by \(s^{1+\varepsilon}\) and the depth is bounded by
	\[
	2^{O(1/\varepsilon)} \log s \leq \frac{4}{\varepsilon} 2^{O(1/\varepsilon)}\log d \leq 2^{O(1/\varepsilon)} \log d.
	\]
	
	So we assume that \(s > d^{4/\varepsilon}\). By first applying Theorem~\ref{thm:BCE} (with $\varepsilon/2$ instead of $\varepsilon$), we obtain a formula $F'$ of size at most $s^{1+\varepsilon/2}$, depth $\Delta' = 2^{O(1/\varepsilon)}\cdot \log s$, and fan-in $2$ computing $P$.
	
	Now, we apply Lemma~\ref{lem:main-lemma}, while setting \(\delta = \lfloor {(\varepsilon \log s)}/{(2 \log d)} \rfloor\). 	 
	Notice that since \(\varepsilon \log s > 4 \log d\), it ensures that \(\delta  > (\varepsilon \log s)/(4 \log d) > 1\).

	Then the strategy produces a formula $F''$ of product-depth at most
	\[
	\phi_\delta(F')\leq \phi_{\delta,1}(F') + \phi_{\delta,2}(F') <
	\lceil\log d\rceil + 2^{O(1/\varepsilon)}\log s \frac{4 \log d}{\varepsilon \log s} = 2^{O(1/\varepsilon)}\log d
	\]
	and size at most 
	\[
	s^{1+\varepsilon/2}\cdot 2^{\delta \log d} \leq s^{1+\varepsilon/2} s^{\varepsilon/2}.
	\]
	This proves the theorem.	
\end{proof}

\subsection{Reducing the product fan-ins to 2}
\label{sec:fanin2}

It is natural to ask if Theorem~\ref{thm:main-intro} can be proved while ensuring that the fan-in of each gate is bounded by~$2$, as in Theorem~\ref{thm:main-intro} and Theorem~\ref{thm:BCE}. This is not possible, as a formula of fan-in $2$ and depth $O(\log d)$ can only compute polynomials on at most $\poly(d)$ variables, while formulas of size $s$ may have up to $s$ variables. However, this does not rule out reducing the fan-in of the product gates to $2$. Indeed, the \emph{circuit} depth-reduction of~\cite{VSBR} does exactly this. We show now that this can also be done for algebraic formulas with bounded syntactic degree.

\begin{theorem}
	\label{thm:prodfanin}
	Let $F$ be a (commutative or non-commutative) algebraic formula $F$ of size $s$, depth $\Delta$, and syntactic degree \(d \geq 1\) computing a polynomial $P$. Then $P$ can also be computed by a formula $F'$ of size $s$ and depth $\Delta' = O(\Delta + \log d)$ where each product gate of $F'$ has fan-in $2$. Furthermore, if $F$ is a homogeneous and/or monotone formula, then so is $F'$.
\end{theorem}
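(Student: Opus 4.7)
The plan is to replace each $\times$-gate of fan-in $k \geq 3$ in $F$ by a binary tree of fan-in-$2$ $\times$-gates whose $k$ leaves are the original children. The naive idea of using a \emph{balanced} binary tree falls short of the target depth: along a root-to-leaf path with product gates of fan-ins $k_1,\ldots,k_p$ one can only show $\sum k_i \leq d + \Delta$, but $\sum \log k_i$ can be as large as $\Theta(p \log((d+\Delta)/p))$, which exceeds $O(\log d + \Delta)$ when $p$ is around $\log d$. The right idea is a Huffman/Shannon--Fano style weighting: place children whose reduced subformulas are already deep at \emph{shallower} positions in the new binary tree.

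More precisely, process $F$ bottom-up and assign to each gate $\alpha$ the depth $\Delta'(\alpha)$ of its reduced subformula. Leaves get $\Delta'(\alpha) = 0$; a $+$-gate sets $\Delta'(\alpha) = 1 + \max_i \Delta'(\beta_i)$. For a $\times$-gate with children $\beta_1,\ldots,\beta_k$, let $w_i = 2^{\Delta'(\beta_i)}$ and $W = \sum_i w_i$. By Kraft's inequality there is a binary tree in which each $\beta_i$ sits at depth $\lceil \log W \rceil - \Delta'(\beta_i)$ (nonnegative since $W \geq w_i$), so the depth through every $\beta_i$ becomes exactly $\lceil \log W \rceil$ and we set $\Delta'(\alpha) := \lceil \log W \rceil$. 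In the non-commutative case we also demand the binary tree to respect the left-to-right order of children; alphabetic coding (Hu--Tucker / Mehlhorn) gives the same estimate up to a constant additive overhead per $\times$-gate, which only affects the absolute constants below.

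The main technical claim is the inductive bound
\[
2^{\Delta'(\alpha)} \;\leq\; C^{\Delta(\alpha)}\bigl(d_\alpha + 1\bigr)
\]
for an absolute constant $C$ (e.g.\ $C = 4$ in the commutative case). The $+$-gate step follows from $\Delta(\beta_i) \leq \Delta(\alpha) - 1$ and $d_{\beta_i} \leq d_\alpha$. For the $\times$-gate step, the construction gives $2^{\Delta'(\alpha)} \leq 2W$; applying the hypothesis on each child yields $2^{\Delta'(\alpha)} \leq 2 C^{\Delta(\alpha)-1}\sum_i(d_{\beta_i}+1) = 2 C^{\Delta(\alpha)-1}(d_\alpha + k)$. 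Since every child of a $\times$-gate has syntactic degree at least one (the paper's conventions forbid constant leaves under $\times$-gates), we have $k \leq d_\alpha$, hence $d_\alpha + k \leq 2(d_\alpha + 1)$; taking $C \geq 4$ closes the induction. Evaluating at the root gives $\Delta'(\text{root}) \leq (\log C)\cdot \Delta + \log(d+1) = O(\Delta + \log d)$.

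Size preservation is immediate: each $\times$-gate of fan-in $k$ is replaced by a binary tree with $k$ leaves that are exactly the original children (subformulas of $F$), so no leaves of $F$ are created or destroyed and $|F'| = s$. Homogeneity, monotonicity, and the non-commutative children-order are preserved by construction. The main obstacle I anticipate is identifying the right potential --- $2^{\Delta'(\alpha)}$ multiplicatively combining with $(d_\alpha+1)$ --- so that the ``$\sum$'' arising under $\times$-gates and the ``$\max$'' arising under $+$-gates both close the induction with a single constant $C$; once this potential is in place, the Huffman-style assignment and its non-commutative alphabetic-code extension are both routine.
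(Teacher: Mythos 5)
Your proof is correct, and it takes a genuinely different route from the paper's. The paper first reduces to the case of product fan-in $3$ and then argues by induction on depth: at a product gate with children $F_1,\ldots,F_t$ of syntactic degrees $d_1,\ldots,d_t$, it picks the ``median'' child $F_m$ (the smallest $m$ with $\sum_{i\leq m} d_i \geq d/2$) and rewrites the product as $F_\ell \times F_m \times F_r$, where $F_\ell = F_1\cdots F_{m-1}$ and $F_r = F_{m+1}\cdots F_t$ each have syntactic degree at most $d/2$ and $F_m$ has depth at most $\Delta-1$; the depth recursion $\Delta'(\Delta,d) \leq 1 + \max\{\Delta'(\Delta,d/2),\, \Delta'(\Delta-1,d)\}$ then unrolls to $\Delta + \log d$. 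Your proof instead keeps the gate structure of $F$, rebalances each product gate individually via a Kraft/Huffman tree weighted by $2^{\Delta'(\beta_i)}$, and closes the induction with the multiplicative potential $2^{\Delta'(\alpha)} \leq C^{\Delta(\alpha)}(d_\alpha+1)$, where the sum-over-children under $\times$-gates is absorbed using $\sum_i(d_{\beta_i}+1) = d_\alpha + k \leq 2d_\alpha$ (since each child has syntactic degree $\geq 1$). The paper's splitting argument is more elementary and avoids any appeal to alphabetic/Hu--Tucker codes (which you correctly note are needed in the non-commutative case to respect child order, costing an extra additive constant per gate); your approach is more ``local'' in that it treats each product gate independently and makes the source of the $O(\Delta+\log d)$ bound transparent through the potential. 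Both proofs are sound, preserve size, homogeneity, monotonicity, and child order, and give the same asymptotic depth.
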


Plugging this in Theorem~\ref{thm:main-syntactic-degree} 
gives a depth-reduction to formulas of depth $O(\log d)$ and size $\poly(s)$ such that all product gates have fan-in at most $2$. A similar result can be obtained with a smaller blow-up in size by combining this statement with Theorem~\ref{thm:strongermain}.

\begin{proof}
	It suffices to prove a weaker version of the above theorem where each product gate has fan-in at most $3$. We can then replace each of the products of fan-in $3$ by a tree of product gates of fan-in $2$ and size $3$. This has the effect of increasing the depth at most by a factor of $2$, which does not affect the overall result. 
	
	So we will prove this slightly weaker version. In this setting, we will aim for a depth $\Delta' = \Delta + \log d.$
	
	This is done by induction on the depth $\Delta$ of the formula. The case of $\Delta = 0$ is trivial. Let $F$ be a formula of depth $\Delta > 0$ and syntactic degree $d$.
	
	Assume that the output gate of $F$ is a sum gate, and $F_1,\ldots, F_t$ are the subformulas of $F$ of depth $\Delta -1$. By definition, each $F_i$ has syntactic degree at most $d$. Applying the induction hypothesis to each of the $F_i$ yields a formula $F_i'$ with product gates of fan-in at most $3$. The formula $F'$ can then be defined as the sum of these formulas. 
	
	Now we come to the main case, which is when the output gate of $F$ is a product gate. Assume that $F_1,\ldots, F_t$ are the subformulas of $F$ of depth $\Delta -1$ in the order\footnote{The order is important in the non-commutative setting.} that they appear in $F$. Let $d_i$ denote the syntactic degree of $F_i$. Define 
	\[
	m = \min\{j \ | \sum_{i= 1}^j d_i \geq d/2\}.
	\]
	Let $F_\ell$ be the formula obtained from $F$ by keeping only the subformulas $F_1,\ldots,F_{m-1}$, and $F_r$ be the formula obtained by keeping only $F_{m+1},\ldots,F_t.$ We use the induction hypothesis on $F_\ell, F_m,$ and $F_r$ to get formulas $F_\ell', F_m'$ and $F_r'$. Finally, we set 
	\[
	F' = F_\ell' \times F_m' \times F_r'.
	\]
	The size\footnote{Recall that the size of a formula is the number of its leaves.} of $F'$ is the sum of the sizes of $F_\ell', F_m'$ and $F_r'$, which is at most $s$ by the induction hypothesis. Let $\Delta_\ell', \Delta_m'$ and $\Delta_r'$ denote the depths of $F_\ell', F_m'$ and $F_r'$ respectively. The depth of $F'$ is 
	\[
	1+\max\{ \Delta_\ell', \Delta_m', \Delta_r'\} \leq 1+ \max\{\Delta + \log(d/2), \Delta-1 + \log d, \Delta + \log(d/2)\} = \Delta + \log d
	\]
	where the second inequality uses the induction hypothesis, and the fact that $F_{\ell}$ and $F_r$ have syntactic degree at most $d/2$ and $F_m$ has depth at most $\Delta-1.$
\end{proof}

\section{Tightness}

\label{sec:tightness}

Given integers $k \geq 1$ and $r \geq 2$, we will define a polynomial $H^{(k,r)}$.
Intuitively, we want to define this polynomial as a standard universal polynomial for formulas. It is composed of \(k\)-nested inner products, each one of size \(r\).
In the following we will drop the superscript in $H^{(k,r)}$ and write simply $H$ instead.

The polynomial $H$ will be defined over the set of
$(2r)^k$  variables
$$\{x_{\sigma,\tau}\ |\ \sigma\in [2]^k, \tau \in [r]^{k}\}.$$
Let us define recursively polynomials $H_{u,v}$ for all $(u,v) \in [2]^{\leq k} \times [r]^{\leq k}$ such that $|u|=|v|$:
\begin{align*}
	H_{u,v} = & \ x_{u,v} 	&& \text{when }|u|=|v|=k\\
	H_{u,v} = & \sum_{a=1}^{r} H_{u 1,v a} H_{u 2, v a} && \text{otherwise}.
\end{align*}
The polynomial $H$ is defined as the polynomial \(H_{\varepsilon,\varepsilon}\).
Note that $H$ is a polynomial of degree $d=2^k$ and has $r^{d-1}$ monomials.

From its definition, $H$ is computed by a monotone formula $M$
of size $(2r)^k$
and depth $2k$,
with a $+$-gate at the top, alternating layers of $+$-gates and $\times$-gates, with $+$-gates of fan-in $r$
and $\times$-gates of fan-in $2$, and leaves labelled with distinct variables.

For words $u$ and $v$ over the same alphabet, we write $u \prefix v$ if $u$ is a prefix of $v$.
There is a natural one-to-one correspondance between prefixes of words of $([r]\times[2])^k$
and nodes of $M$, which is the following.
Let $\sigma = \sigma_1 \ldots \sigma_k \in [2]^k$, and $\tau = \tau_1 \ldots \tau_{k} \in [r]^{k}$.
The word $\tau_1 \sigma_1 \ldots \tau_k \sigma_k$ corresponds to a path from the root of $M$
to the leaf labelled $x_{\sigma,\tau}$, while proper prefixes of $\tau_1 \sigma_1 \ldots \tau_k \sigma_k$
correspond to internal gates in $M$ along this path.
For $\ell < k$, $u=u_1\ldots u_\ell \in [2]^\ell$ and $v=v_1 \ldots v_\ell \in [r]^\ell$,
the node which corresponds to the word $v_1 u_1 \ldots v_\ell u_\ell$ is the $+$-gate of $M$ computing $H_{u,v}$.

The polynomial $H$ is easily seen to be set-multilinear with respect to
the sets of variables $\{ X_\sigma\ |\ \sigma \in [2]^k\}$
where $X_\sigma = \{x_{\sigma,\tau}\ |\ \tau \in [r]^{k} \}$. This means that each monomial has exactly one variable from each set $X_\sigma$.

\begin{remark}
	\label{rem:isomorphic-subpolynomials}
	For $|u|=|v|=\ell \leq k$, $u \in [2]^\ell$ and $v \in [r]^\ell$,
	the polynomial $H^{(k,r)}_{u,v}$
	is defined over the set of variables
	$$X_{u,v} = \{ x_{\sigma,\tau}\ |\ \sigma \in [2]^k,\ \tau \in [r]^{k}, u \prefix \sigma,\ v\prefix \tau\}$$
	and is the polynomial $H^{(k-\ell,r)}$ (upto renaming of the variables).
	In particular, its degree is $d'=2^{k-\ell}$,
	it has $r^{d'-1}$ monomials 
	and it is set-multilinear with
	respect to $\{ X_\sigma\ |\ \sigma \in [2]^k,\ u\prefix \sigma\}$.
\end{remark}

Before proving hardness of $H$ for small-depth monotone formulas, we need to show that the gates of a monotone formula computing~\(H\) cannot compute too many monomials. This is proved in Lemma~\ref{lem:number-of-monomials} below.

\begin{proposition}
	\label{prop:product-of-variables-in-H}
	Consider two variables $x_{\sigma,\tau}$ and $x_{\sigma',\tau'}$.
	If $x_{\sigma,\tau} x_{\sigma',\tau'}$ appears in a monomial of~$H$,
	and if $\sigma$ and $\sigma'$ have a common prefix of length $\ell < k$,
	then $\tau$ and $\tau'$ have a common prefix of length $\ell+1$.
\end{proposition}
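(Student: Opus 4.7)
The approach is to work with parse trees of the canonical monotone formula $M$ computing $H$. Every monomial of $H$ is produced by some parse tree of $M$, which is obtained by picking exactly one child at each $+$-gate and keeping both children at each $\times$-gate. Under the labelling of $M$ by pairs $(u,v)$, at each $+$-gate $H_{u,v}$ with $|u|=|v|<k$ the parse tree commits to a single $a\in[r]$ and then lives inside $H_{u 1, v a} \cdot H_{u 2, v a}$. This ``unique $+$-choice'' is the only mechanism we will need.

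Fix a monomial $m$ of $H$ containing both $x_{\sigma,\tau}$ and $x_{\sigma',\tau'}$, and let $T$ be a parse tree producing $m$. The key intermediate claim, which I would prove by induction on $\ell'$, is: for every $0\leq \ell'\leq \ell$, if $\sigma$ and $\sigma'$ agree on their first $\ell'$ positions (call this common prefix $u$), then $\tau$ and $\tau'$ also agree on their first $\ell'$ positions (call this $v$), and both leaves $x_{\sigma,\tau}$, $x_{\sigma',\tau'}$ descend in $T$ from the common $+$-gate $H_{u,v}$. The base case $\ell'=0$ is immediate, since both leaves descend from the root $H_{\varepsilon,\varepsilon}$.

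For the inductive step, applying the hypothesis at $\ell'-1$ places both leaves in the subtree rooted at a common $+$-gate $H_{u',v'}$ with $|u'|=|v'|=\ell'-1<k$. The parse tree $T$ commits to a single $a\in[r]$ at this gate, so both leaves lie in $H_{u' 1, v' a} \cdot H_{u' 2, v' a}$; this forces $\tau_{\ell'}=\tau'_{\ell'}=a$, extending the shared prefix of $\tau$ and $\tau'$ to length $\ell'$. Since $\sigma_{\ell'}=\sigma'_{\ell'}$ by assumption, both leaves then descend into the same $\times$-factor, the $+$-gate $H_{u, v' a}$, finishing the inductive step. To conclude the proof of the proposition, I invoke the completed claim at $\ell'=\ell$, obtaining a common $+$-gate $H_{u,v}$ ancestor of both leaves; because $\ell<k$, one more application of the unique $+$-choice at $H_{u,v}$ yields a single $a_0\in[r]$ forcing $\tau_{\ell+1}=\tau'_{\ell+1}=a_0$.

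I do not foresee any real obstacle: the argument is a direct unravelling of the recursive definition of $H$ along the parse tree $T$. The only care needed is the bookkeeping that matches each $+$-gate of $M$ with one position of $\tau$ and each $\times$-gate with one position of $\sigma$, so that ``agreement in $\sigma$ at position $i$'' propagates into ``agreement in $\tau$ at position $i+1$'' in lockstep with descent in $T$; this one-position lag is exactly what produces the $\ell+1$ in the statement.
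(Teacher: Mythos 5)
Your proof is correct and takes essentially the same approach as the paper: both arguments work with parse trees of the canonical formula $M$ and exploit the fact that a parse tree retains a unique child at each $+$-gate. The paper packages this as a short contradiction via the lowest common ancestor of the two leaves in $M$, whereas you unroll it as an induction along the shared root-to-leaf path, but the underlying observation is the same.
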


\begin{proof}
	Observe that
	$x_{\sigma_1,\tau_1} \ldots x_{\sigma_p,\tau_p}$ is a monomial of $H$ if and only if
	these variables
	form the leaves of a parse tree of $M$.
	As observed above, the root-to-leaf path leading to the variable
	$x_{\sigma_i,\tau_i}$ in $M$
	is obtained by taking in turn the first letter of  $\tau_i$, the first letter of $\sigma_i$, the second letter of $\tau_i$, etc.
	
	If the product $x_{\sigma,\tau} x_{\sigma',\tau'}$ appears in a monomial of $H$, it must
	be possible to complete
	the union of the two paths, from the root to $x_{\sigma,\tau}$ and from the root to $x_{\sigma',\tau'}$,
	into a parse tree of the formula $M$.
	
	If the longest common prefix of $\tau$ and $\tau'$ were of length at most $\ell$,
	then the lowest common ancestor of $x_{\sigma,\tau}$ and $x_{\sigma',\tau'}$ in $M$ would be a $+$-gate,
	which is not possible in a parse tree where each $+$-gate has a single child.
\end{proof}

\begin{lemma}\label{lem:number-of-monomials}
	If \(F\) is a monotone formula which computes the polynomial \(H\) and if \(\alpha\) is a gate of \(F\) of degree \(d_{\alpha}\), then the number of monomials of the polynomial computed at gate \(\alpha\) is at most \(r^{d_{\alpha}-1}\).
\end{lemma}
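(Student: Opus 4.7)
Plan: I will combine the set-multilinearity of $H$ with Proposition~\ref{prop:product-of-variables-in-H} and the monotonicity of $F$ to constrain the polynomial $g_\alpha$ computed at $\alpha$, and then count its possible monomials.

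Step 1 (isolating $\alpha$ via monotone pairing). Since a formula is a tree, $\alpha$ occurs at a unique position and $F$ is linear in its subformula at $\alpha$, giving a decomposition $H = g_\alpha \cdot h_\alpha + H_{\mathrm{else}}$, where $h_\alpha$ describes the ``outside'' of $\alpha$ in parse trees that use $\alpha$. Assuming all subformulas are nonzero we have $h_\alpha \neq 0$. Monotonicity forbids cancellations, so every product $m \cdot m'$ of a monomial of $g_\alpha$ with a monomial of $h_\alpha$ is a monomial of $H$. Set-multilinearity of $H$ with respect to $\{X_\sigma : \sigma \in [2]^k\}$ then forces each such product to use exactly one variable from each $X_\sigma$, and pairing any two monomials of $g_\alpha$ against the same $m' \in h_\alpha$ shows that all monomials of $g_\alpha$ share a common $X$-support $S_\alpha \subseteq [2]^k$, with $|S_\alpha| = d_\alpha$.

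Step 2 (prefix encoding). Each monomial of $g_\alpha$ is of the form $\prod_{\sigma \in S_\alpha} x_{\sigma, \tau(\sigma)}$ for some $\tau : S_\alpha \to [r]^k$. By Proposition~\ref{prop:product-of-variables-in-H}, whenever $\sigma, \sigma' \in S_\alpha$ share a prefix of length $\ell < k$, the tuples $\tau(\sigma)$ and $\tau(\sigma')$ share a prefix of length $\ell + 1$. Unrolling this, $\tau(\sigma)_j$ depends only on $\sigma|_{j-1}$, so $\tau$ is determined by a labelling $c : V \to [r]$, where $V$ is the set of length-$<k$ prefixes of words in $S_\alpha$.

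Step 3 (the boundary is fixed). Let $B \subseteq V$ be the prefixes that are also prefixes of some word in $[2]^k \setminus S_\alpha$. Fix any monomial $m' \in h_\alpha$ with its own prefix-encoding $c'$ on length-$<k$ prefixes of $[2]^k \setminus S_\alpha$. For every monomial $m$ of $g_\alpha$ we have $m \cdot m' \in H$, and applying Proposition~\ref{prop:product-of-variables-in-H} between some $\sigma \in S_\alpha$ and some $\sigma' \in [2]^k \setminus S_\alpha$ with common prefix $u \in B$ forces $c(u) = c'(u)$. So $c|_B$ is the same for every monomial of $g_\alpha$, determined by $h_\alpha$ alone.

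Step 4 (counting). The only freedom in $c$ is on $V \setminus B$, which is precisely the set of length-$<k$ prefix nodes lying inside the maximal complete binary subtrees of $[2]^k$ that are entirely contained in $S_\alpha$. If these subtrees have leaf counts $|S_1|, \dots, |S_m|$ with $\sum_i |S_i| = d_\alpha$, each contributes $|S_i| - 1$ internal prefix nodes, so $|V \setminus B| = d_\alpha - m$. Since $S_\alpha$ is nonempty we have $m \geq 1$, yielding at most $r^{d_\alpha - m} \leq r^{d_\alpha - 1}$ admissible labellings and hence monomials. I expect Step 3 to be the main subtlety: turning the pairwise prefix-compatibility condition into a global constraint on $g_\alpha$ really does require monotonicity, because otherwise cancellations between $g_\alpha \cdot h_\alpha$ and $H_{\mathrm{else}}$ could tolerate a boundary-inconsistent monomial of $g_\alpha$ without it ever appearing in $H$.
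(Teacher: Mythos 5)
Your proof is correct and follows essentially the same path as the paper's: you identify the common $X$-support $S_\alpha$ (the paper's set $I$), partition it into maximal dyadic blocks, use a fixed ``outside'' monomial of $h_\alpha$ together with Proposition~\ref{prop:product-of-variables-in-H} and monotonicity to pin down the labels on the boundary prefixes (the paper's $v_\ell$'s), and count the remaining freedom. The only cosmetic difference is that the paper bundles the counting via Remark~\ref{rem:isomorphic-subpolynomials} (each block contributes a sub-polynomial isomorphic to $H^{(k-\ell,r)}$ with $r^{|I_{u_\ell}|-1}$ monomials) while you count the free internal prefix nodes $V\setminus B$ directly --- the two computations are the same.
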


\begin{proof}
	Since \(H\) has \(r^{d-1}\) monomials, the result is true when \(d_{\alpha} = d\) by monotonicity. Assume now that \(d_{\alpha} < d\). 
	
	Let $$I = \{ \sigma \in [2]^k\ |\ \text{some variable of $X_\sigma$ appears in $\alpha$}\}.$$
	For $u \in [2]^{\leqslant k}$, let $I_u = \{ \sigma \in [2]^k\ |\ u \prefix \sigma\}$.
	Let $\{u_1, \ldots, u_p\}$ be the set of words $w$ of minimal length such that $I_w \subseteq I$.
	Then $I$ is the disjoint union $\bigcup_{\ell \in [p]} I_{u_\ell}$.
	Since $d_\alpha<d$, $I \neq [2]^k$ so no $u_\ell$ is the empty word.
	
	Consider some $\ell \in [p]$.
	Let $\bar{u}_\ell$ be obtained by switching the last letter of $u_\ell$.
	By minimality of the length of $u_\ell$, we must have $I_{\bar{u}_\ell} \nsubseteq I$. 
	Let $\sigma$ be a word in   $I_{\bar{u}_\ell} \setminus I$. 
	Since $F$ is monotone and computes a set-multilinear polynomial, it is a set-multilinear formula
	and therefore the polynomial computed at $\alpha$
	must be multiplied by some variable $x_{\sigma,\tau}$.
	Let $v_\ell$ be the prefix of length $|u_\ell|$ of $\tau$.
	Consider any variable $x_{\sigma',\tau'}$ appearing in $\alpha$ such that $u_\ell \prefix \sigma'$.
	Since the product $x_{\sigma,\tau} x_{\sigma',\tau'}$ must appear in a monomial of $H$ by monotonicity,
	it must be that $v_\ell \prefix \tau'$
	by Proposition~\ref{prop:product-of-variables-in-H}.

	Any monomial $m$ in the polynomial computed in gate $\alpha$ can be written in a unique way $m=m_1 \cdots m_p$ with $m_\ell$
	set-multilinear with respect to $\{X_{\sigma}\ |\ u_\ell \prefix \sigma\}$.
	By the above, $m_\ell$ is a monomial over the variables $X_{u_\ell,v_\ell}$ of degree $|I_{u_\ell}|$.
	By monotonicity, it should be possible to complete leaves of $M$ labelled with variables from $m_\ell$
	into a parse tree of $M$ appearing in $H$, which proves that 
	$m_\ell$ is a monomial of $H_{u_\ell,v_\ell}$.
	There are at most $r^{|I_{u_\ell}|-1}$ such submonomials $m_\ell$ by Remark~\ref{rem:isomorphic-subpolynomials}.
	It follows that the number of monomials of the polynomial computed in node $\alpha$ is at most
	\[
	\prod_{\ell=1}^p r^{\lvert I_{u_{\ell}} \rvert -1} \leq r^{\lvert I \rvert -1} = r^{d_\alpha-1}.\qedhere
	\]
\end{proof}	

We are ready to prove hardness of the polynomial $H$ for monotone computation.
We shall make use of the following ``product lemma'', which comes in different forms in e.g.~\cite{SY, HrubesY11,Ramprasadsurvey}.

\begin{lemma}\label{lem:product-decomposition}
	A degree-$d$ homogeneous formula $F$ of size $s$ and product-depth $\Delta$
	can be written as a sum of $O(s)$ polynomials, 
	each of which
	is a product of $\Omega(\Delta d^{1/\Delta})$ many polynomials of
	positive degree. Moreover, each of these polynomials is computed by
	some gate in $F$.
\end{lemma}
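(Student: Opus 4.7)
The plan is to prove the lemma by induction on the product-depth $\Delta$, using a decomposition that descends along a maximum-degree child at each product layer and accumulates the sibling subformulas as factors of the resulting product.

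As a preliminary reduction I would normalize the formula so that layers of $+$- and $\times$-gates strictly alternate (merging adjacent gates of the same type). The base case $\Delta=1$ is then immediate: $F$ becomes a sum of at most $s$ products, each of exactly $d$ linear forms of positive degree by homogeneity, which already meets the target bound $\Omega(\Delta d^{1/\Delta})=\Omega(d)$ with a universal constant.

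For the inductive step with $\Delta\geq 2$, expand the top $+$-gate to write $F=\sum_i P_i$, where each $P_i=g_{i,1}\cdots g_{i,k_i}$ is a product gate directly below the root, whose positive-degree children $g_{i,j}$ have product-depth at most $\Delta-1$ and degrees summing to $d$. For each $P_i$, let $g^*$ be a child of maximum degree $d^*\geq d/k_i$. By the induction hypothesis, $g^*$ can be written as a sum of $O(\mathrm{size}(g^*))$ products, each of which is a product of at least $c_{\Delta-1}(\Delta-1)(d^*)^{1/(\Delta-1)}$ positive-degree factors computed at gates of $g^*$. Substituting this expansion back in and distributing against the remaining $k_i-1$ sibling factors, each $P_i$ becomes a sum of products with at least
\[
(k_i - 1) + c_{\Delta-1}(\Delta-1)\,(d/k_i)^{1/(\Delta-1)}
\]
positive-degree factors, all of them computed at gates of $F$. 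A direct critical-point calculation shows that, regardless of the value of $k_i$, this quantity is minimized near $k_i \sim d^{1/\Delta}$ and equals $c_{\Delta-1}^{(\Delta-1)/\Delta}\cdot \Delta\, d^{1/\Delta}$ there. Since $c_1=1$ and the recurrence $c_\Delta\geq c_{\Delta-1}^{(\Delta-1)/\Delta}$ preserves $c_\Delta\geq 1$ by induction, the bound $\Omega(\Delta d^{1/\Delta})$ holds at every level with a universal constant.

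It remains to bound the number of summands by $O(s)$, which I would handle by a standard charging argument: the chosen subformulas $g^*$ associated to different $P_i$'s occupy pairwise disjoint portions of $F$, and each $+$-gate expansion contributes at most one summand per child of that gate. Telescoping the total cost across the entire recursion yields $O(s)$ summands. The main obstacle is establishing the inductive step with a constant that does not decay with $\Delta$; the critical-point computation above is exactly what provides the exponent $(\Delta-1)/\Delta$ on the previous constant, and this is what keeps the sequence $c_\Delta$ bounded below by $1$ throughout the induction.
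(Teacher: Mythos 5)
The paper does not actually prove this lemma itself: it is invoked as a known ``product lemma'' and attributed to the references \cite{SY, HrubesY11, Ramprasadsurvey}. So there is no in-paper proof to compare against; I will only assess your argument on its own terms.

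Your overall decomposition --- at each $\times$-gate, descend into a maximum-degree child, keep the $k_i-1$ siblings as factors, and distribute --- is the standard and correct skeleton, and your charging of the $O(s)$ summands to the disjoint subformulas $g^*$ is sound. The critical-point computation $(k-1)+c(\Delta-1)(d/k)^{1/(\Delta-1)}\geq c^{(\Delta-1)/\Delta}\Delta d^{1/\Delta}$ is also the right quantitative heart of the matter.

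However, there is a genuine gap in the inductive step. You apply the inductive hypothesis to $g^*$ in the form ``$g^*$ yields at least $c_{\Delta-1}(\Delta-1)(d^*)^{1/(\Delta-1)}$ factors,'' but $g^*$ only has product-depth $\Delta'\leq\Delta-1$, not necessarily $\Delta'=\Delta-1$. The induction hypothesis therefore gives $c_{\Delta'}\,\Delta'(d^*)^{1/\Delta'}$ factors, and the quantity $\Delta'(d^*)^{1/\Delta'}$ is \emph{not} monotone in $\Delta'$: it decreases up to $\Delta'\approx\ln d^*$ and increases afterwards. When $\Delta-1$ exceeds $\log d^*$ (which happens as soon as $k_i$ is moderately large, forcing $d^*=d/k_i$ to be small), the adversarial subformula can have $\Delta'\approx\ln d^*\ll\Delta-1$, and the recursive bound drops to $O(\log d^*)$ rather than the $\Theta((\Delta-1)(d^*)^{1/(\Delta-1)})$ your calculation assumes. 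This is not a cosmetic issue: plugging the weaker bound into $(k_i-1)+\cdots$ and re-optimizing over $k_i$ no longer returns $\Omega(\Delta d^{1/\Delta})$ with a clean fixed constant; the balance between the two terms moves, and your fixed-point recurrence $c_\Delta= c_{\Delta-1}^{(\Delta-1)/\Delta}$ no longer describes what is happening.

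The clean way to close the gap is to not induct on product-depth at all, but to track directly the number $L$ of $\times$-gates traversed on the recursion path. Along that path the degree drops by a factor of at most $k_\ell$ at the $\ell$-th $\times$-gate, and the path ends at a leaf, so $\prod_\ell k_\ell\geq d$ and $L\leq\Delta$; the factor count is exactly $1+\sum_\ell(k_\ell-1)$. Minimizing $1+\sum_\ell(k_\ell-1)$ subject to $k_\ell\geq 2$, $\prod_\ell k_\ell\geq d$, $L\leq\Delta$ gives $1+\Delta(d^{1/\Delta}-1)$ when $\Delta\leq\log_2 d$, which is $\Omega(\Delta d^{1/\Delta})$ since then $d^{1/\Delta}\geq 2$. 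This presentation also makes explicit a second point you should flag: when $\Delta>\log_2 d$, the optimizer is $L=\log_2 d$ (not $L=\Delta$) and yields only $1+\log_2 d$ factors, which is \emph{not} $\Omega(\Delta d^{1/\Delta})$. So the lemma as stated really only holds (or is only needed) in the regime $\Delta\leq O(\log d)$, which is exactly what Proposition~\ref{prop:hardness-of-H} assumes; your proof should either restrict to that regime or address it. Finally, your base case should also cover $\Delta=0$ (a linear form contributing a single factor), since $g^*$ can be a leaf.
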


\begin{proposition}
	\label{prop:hardness-of-H}
	If \(F\) is a monotone formula of product-depth \(\Delta  \leq \log d\)
	which computes \(H\), then its size is at least $r^{\Omega(\Delta d^{1/\Delta})}$.
\end{proposition}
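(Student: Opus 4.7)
The plan is to upper bound the number of monomials of $H$ by combining the product decomposition of Lemma~\ref{lem:product-decomposition} with the per-gate monomial count of Lemma~\ref{lem:number-of-monomials}. First I would argue that $F$ is (syntactically) homogeneous. Indeed, since $F$ is monotone and $H$ is homogeneous of degree $d$, every parse tree of $F$ must compute a monomial of $H$, and hence has total degree exactly $d$. But then any two children of a $+$-gate must have the same syntactic degree, for otherwise parse trees through different children would produce monomials of different degrees, which cannot both survive in $H$. Thus $F$ is homogeneous, and Lemma~\ref{lem:product-decomposition} applies.

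Next, I would apply Lemma~\ref{lem:product-decomposition} to write
\[
H = \sum_{j=1}^{N} \prod_{i=1}^{t_j} g_{j,i},
\]
with $N = O(s)$, $t_j = \Omega(\Delta d^{1/\Delta})$, each $g_{j,i}$ of positive degree $d_{j,i}$, and each computed by some gate of $F$. By homogeneity, $\sum_{i=1}^{t_j} d_{j,i} = d$ for each $j$. The per-gate bound of Lemma~\ref{lem:number-of-monomials} gives that each $g_{j,i}$ has at most $r^{d_{j,i}-1}$ monomials. Since the number of monomials of a product of polynomials is at most the product of the numbers of monomials of the factors, each summand has at most
\[
\prod_{i=1}^{t_j} r^{d_{j,i}-1} = r^{d - t_j} \leq r^{d - t}
\]
monomials, where $t = \Omega(\Delta d^{1/\Delta})$ is a common lower bound on $t_j$.

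Summing over $j$, the total number of monomials of $H$ is at most $O(s) \cdot r^{d - t}$. Since $H$ has exactly $r^{d-1}$ monomials, this forces $s \geq \Omega(r^{t-1}) = r^{\Omega(\Delta d^{1/\Delta})}$, proving the proposition. The one step needing genuine work is the per-gate monomial bound already established in Lemma~\ref{lem:number-of-monomials}, which depends on the rigid combinatorial structure of $H$ captured by Proposition~\ref{prop:product-of-variables-in-H}; once that is in hand, the main obstacle is really just lining up the product decomposition with the counting inequality, which is straightforward. In particular, no extra cancellation argument is required, since the counting bounds $\lvert \mathrm{mon}(AB)\rvert \leq \lvert \mathrm{mon}(A)\rvert\cdot\lvert\mathrm{mon}(B)\rvert$ and $\lvert \mathrm{mon}(A+B)\rvert \leq \lvert \mathrm{mon}(A)\rvert + \lvert\mathrm{mon}(B)\rvert$ hold unconditionally.
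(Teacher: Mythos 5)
Your proposal is correct and follows essentially the same route as the paper: combine the product decomposition of Lemma~\ref{lem:product-decomposition} with the per-gate monomial bound of Lemma~\ref{lem:number-of-monomials}, then count monomials against the $r^{d-1}$ monomials of $H$. The only addition is that you spell out why a monotone formula computing a homogeneous polynomial is itself homogeneous (via parse trees), a fact the paper simply asserts.
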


\begin{proof}
	Let $t=\Delta d^{1/\Delta}$.
	Since $F$ is monotone, it is homogeneous and by Lemma~\ref{lem:product-decomposition} can be written as
	$$F = \sum_{i=1}^{s'} \prod_{j=1}^{t_i} F_{i,j}$$
	with $s'=O(s)$ and $t_i =\Omega(t)$ for all $i$, 
	and $F_{i,j}$ is of degree at least $1$ and computed by some gate in $F$.
	
	Let $d_{i,j}$ be the degree of $F_{i,j}$.
	By Lemma~\ref{lem:number-of-monomials}, each $F_{i,j}$ computes at most $r^{d_{i,j}-1}$ monomials.
	The number of monomials of $\prod_{i=1}^{t_j} F_{i,j}$ is therefore bounded by
	$$\prod_{j=1}^{t_i} r^{d_{i,j}-1} \leq r^{\sum_{j=1}^{t_i} d_{i,j}-t_i} \leq r^{d-t}$$
	since $\sum_{j=1}^{t_i} d_{i,j} = d$.
	It follows that the number of monomials computed by $F$ is at most $s \cdot r^{d-t}$.
	Since $F$ computes $H$ which has $r^{d-1}$ monomials, we get $s = r^{\Omega (t)}$.
\end{proof}

We can now get Theorem~\ref{thm:lowerbound-intro} from the introduction, which is restated here for convenience.
\begin{theorem}
	\label{thm:lowerbound}
	Let $n$ and $d = d(n)$ be growing parameters such that $d(n)\leq \sqrt{n}$. Then there is a monotone algebraic formula $F$ of size at most $n$ and depth $O(\log d)$ computing a polynomial $P\in \F[x_1,\ldots,x_n]$ of degree at most $d$ such that any monotone formula $F'$ of depth $o(\log d)$ computing $P$ must have size $n^{\omega(1)}.$
\end{theorem}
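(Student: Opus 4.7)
The plan is to instantiate the universal polynomial $H^{(k,r)}$ from Section~\ref{sec:tightness} with parameters $(k,r)$ tuned so that the natural monotone formula $M$ fits the prescribed size/depth budget while Proposition~\ref{prop:hardness-of-H} delivers the desired super-polynomial lower bound.

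Concretely, set $k = \lfloor \log d \rfloor$ and $r = \lfloor n^{1/k}/2 \rfloor$. The hypothesis $d \leq \sqrt{n}$ gives $k \leq (\log n)/2$, hence $n^{1/k} \geq 4$ and $r \geq 2$, so $H = H^{(k,r)}$ is well-defined; it has degree $d_H = 2^k \leq d$, at most $(2r)^k \leq n$ variables, and is computed by the explicit monotone formula $M$ of size $(2r)^k \leq n$ and depth $2k = O(\log d)$. Padding with dummy variables if needed yields the target polynomial $P$ on exactly $n$ variables together with its small, shallow monotone formula $F$.

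For the lower bound, let $F'$ be any monotone formula of depth $\Delta = o(\log d)$ computing $P$. Zeroing out the dummy variables produces a monotone formula for $H$ of depth at most $\Delta$, and for $n$ large enough we have $\Delta \leq \log d_H$, so Proposition~\ref{prop:hardness-of-H} applies and gives
\[
\mathrm{size}(F') \geq r^{\Omega(\Delta\, d_H^{1/\Delta})}.
\]
The crux is then a clean asymptotic estimate: writing $\Delta = (\log d_H)/g$ with $g = g(n) \to \infty$ (which is precisely what $\Delta = o(\log d_H)$ says), we have $d_H^{1/\Delta} = 2^g$ and hence $\Delta\, d_H^{1/\Delta} = (\log d_H) \cdot 2^g/g$. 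Since $2^g/g \to \infty$, this exceeds any constant multiple of $\log d_H$. Combined with $\log r \geq (\log n)/\log d_H - O(1)$ coming from the choice of $r$, the exponent $\Omega(\Delta\, d_H^{1/\Delta}) \cdot \log r$ is $\omega(\log n)$, whence $\mathrm{size}(F') = n^{\omega(1)}$, as claimed.

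The main (and really only) technical point is the asymptotic estimate of $\Delta\, d_H^{1/\Delta}$ for $\Delta = o(\log d_H)$; everything else is routine parameter bookkeeping. The slack afforded by the assumption $d \leq \sqrt{n}$ is exactly what permits a clean choice with $r \geq 2$ and converts $\log r \geq (\log n)/\log d_H - O(1)$ into a super-polynomial blow-up in $n$.
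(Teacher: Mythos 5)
Your proposal is correct and follows essentially the same route as the paper: choose $k \approx \log d$ and $r \approx n^{1/\log d}/2$ (with $d \leq \sqrt{n}$ ensuring $r \geq 2$), use the explicit monotone formula $M$ for $H^{(k,r)}$ as the upper bound, and invoke Proposition~\ref{prop:hardness-of-H} together with the elementary estimate $\Delta\, d^{1/\Delta} = \omega(\log d)$ when $\Delta = o(\log d)$ for the lower bound. The only cosmetic difference is that you spell out the substitution $\Delta = (\log d_H)/g$ explicitly, whereas the paper simply asserts that $\Delta d^{1/\Delta}/\log d \to \infty$.
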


\begin{proof}[Proof of Theorem~\ref{thm:lowerbound-intro}]
	Choose parameters $k(n)$ and $r(n)$ 
	such that $P(n):=H^{(k,r)}$ has $\Theta(n)$ variables and degree $\Theta(d)$: let
	$k=\log d$ and $r=\frac{1}{2} n^{1/\log d}$.
	Condition $d(n)\leq \sqrt{n}$ ensures that $r\geqslant 2$.
	The polynomial $P$ has a monotone formula of size $O(n)$ and depth $O(\log d)$.
	By Proposition~\ref{prop:hardness-of-H}, 
	any monotone formula of product-depth $\Delta \leq \log d$ computing $P$
	has size
	$$r^{\Omega(\Delta d^{1/\Delta})} = 
	\left(\frac{1}{2} n^{1/\log d}\right)^{\Omega(\Delta d^{1/\Delta})} 
	= (n/d)^{\left( \frac{\Delta d^{1/\Delta}}{\log d} \right)}$$
	which is $n^{\Omega \left( \frac{\Delta d^{1/\Delta}}{\log d} \right)}$
	using the hypothesis $d(n)\leq \sqrt{n}$.
	Since $\frac{\Delta d^{1/\Delta}}{\log d} \rightarrow +\infty$ when $\Delta = o \left(\log d \right)$ this bound is $n^{\omega(1)}.$
\end{proof}

\section{Conclusion and Open questions}
\label{sec:final}

In this paper we investigated the possibility of reducing the depth of a formula of size \(s\) computing a polynomial of degree \(d\) to \(O(\log d)\) while keeping the size \(s^{O(1)}\).

We showed (Theorem~\ref{thm:main-intro}) that we can do such a transformation when \(F\) is {\em homogeneous}. More generally, Theorem~\ref{thm:main-syntactic-degree} states that we can achieve it as soon as the syntactic degree of \(F\) is polynomially bounded in \(d\).

\paragraph*{Structure inside VF.} Let us consider  a sequence of polynomials $(f_n)$ whose number of variables and  degree are bounded polynomially in \(n\) (such a family is usually called a \(p\)-family, see for example~\cite{BurgisserVPVNP}). We can then consider three classes of such families:
\begin{itemize}
	\item \(\text{homF}[s(n)] = \{ (f_n) \mid  f_n \text{ is computed by a homogeneous formula of size }\poly(s(n))\}\), 
	\item \(\text{lowSynDegF}[s(n)] = \{ (f_n) \mid f_n \text{ is computed by a formula of size }\poly(s(n)) \text{ and of} \\ \text{ syntactic degree } \poly(\deg(f_n))\}\)
	\item \(\text{lowDepthF}[s(n)] = \{ (f_n) \mid f_n \text{ is computed by a formula of size }\poly(s(n)) \text{ and of} \\ \text{ depth } O(\log \deg(f_n))\}\).
\end{itemize}

Clearly, we have the inclusion \(\text{homF}[s(n)] \subseteq \text{lowSynDegF}[s(n)]\). Also, in this paper, we have shown the inclusion \(\text{lowSynDegF}[s(n)] \subseteq \text{lowDepthF}[s(n)]\). Consequently, 
\[\text{homF}[\poly(n)] \subseteq \text{lowSynDegF}[\poly(n)]\subseteq \text{lowDepthF}[\poly(n)] \subseteq \text{VF},
\]
and we do not know if these inclusions are strict or not. 

\paragraph{The complexity of the Elementary Symmetric Polynomials.} A particularly interesting special case of the questions above comes from the example of the \emph{Elementary Symmetric Polynomials.} Given parameters $d, n$ with $d \leq n,$ recall that the Elementary Symmetric polynomial $S_n^d(x_1,\ldots, x_n)$ is the sum of all the multilinear monomials of degree exactly $d$. A simple and elegant construction of Ben-Or (see~\cite{SW}) shows that for any $d$, the polynomial $S_n^d$ has an \emph{inhomogeneous} formula of depth-$3$ and size $O(n^2).$ This puts this family of polynomials in the class $\text{lowDepthF}[\poly(n)].$ Further, Shpilka and Wigderson~\cite[Theorem 5.3]{SW}, showed that $S_n^d$ has depth-$6$ formulas of syntactic degree at most $\poly(d),$ putting it in the class $\text{lowSynDegF}[\poly(n)].$

However, as far as we know, there are no known $\poly(n)$-sized homogeneous formulas for this family of polynomials.\footnote{A strong form of this was conjectured by Nisan and Wigderson~\cite{NW95}, which was subsequently refuted by Hrube\v{s} and Yehudayoff~\cite{HrubesY11}. However, this still does not yield polynomial-sized homogeneous formulas for all elementary symmetric polynomials.} In fact, under some further restrictions, Hrube\v{s} and Yehudayoff showed~\cite{HrubesY11} a superpolynomial homogeneous formula lower bound when $d = n/2$. Removing these restrictions would show a separation between $\text{homF}[\poly(n)]$ and $\text{lowSynDegF}[\poly(n)]$. On the other hand, if indeed the elementary symmetric polynomials have $\poly(n)$-sized homogeneous formulas, then this can be used to argue\footnote{see e.g. \cite[Section III]{LST-FOCS} for the standard argument} the same for any polynomial computed by a depth-$3$ formula of polynomial size, hinting at a possible collapse between $\text{homF}[\poly(n)]$ and $\text{lowSynDegF}[\poly(n)]$.

\paragraph*{Lower bounds for higher-depth formulas.} Due to the recent lower bound results of~\cite{LST-FOCS}, we know that there is an explicit homogeneous polynomial $P(X)$ of degree $d$ on $n$ variables that cannot be computed by any formula of size $\poly(n)$ and depth $\varepsilon\cdot \log \log d$, for some absolute constant $\varepsilon > 0$. It turns out that the polynomial $P$ \emph{is} computable by an \emph{algebraic branching program} and therefore, lies in the complexity class called VBP. 

It is known that VF is contained in VBP. However, we do not know whether this containment is strict or not. Our lower bound result helps us pose a refined version of this question. Specifically, it shows that if the lower bound from~\cite{LST-FOCS} can be improved from $\Omega(\log \log d)$ to $\omega(\log d)$, then we will have separated VF from VBP. 

The fact that our depth reduction carries over to the non-commutative setting, makes a compelling case for revisiting the VF vs. VBP question in the non-commutative setting. Specifically, a recent result of~\cite{LST-STOC} shows that there is an explicit non-commutative polynomial $P(X)$ of degree $d$ on $n$ variables  that cannot be computed by any non-commutative formula of size $\poly(n)$ and depth $\varepsilon\cdot \sqrt{\log d}$. So, improving the lower bound in this case from $\Omega(\sqrt{\log d})$ to $\omega(\log d)$ would separate VF from VBP in the non-commutative setting.

\bibliographystyle{alpha}
\bibliography{references}

\clearpage
\appendix

\section{The Bonet-Buss depth-reduction}\label{sec:BonetBuss}

In this section, we give a proof of Theorem~\ref{thm:BCE}. This is essentially the same proof as in the work of Bonet and Buss~\cite{BonetBuss}, but since the results in that paper are only stated for Boolean formulas, we give the proof here for completeness.

We show that for some large enough absolute constant $C > 0$, any polynomial $P$ computed by a formula $F$ of size $s$ can also be computed by a formula $F'$ of size at most $s^{1+\varepsilon}$, depth at most $2^{O(1/\varepsilon)}\cdot \log s$, and fan-in $2$. Further, if $F$ is homogeneous/monotone/non-commutative, then so is $F'$. This latter claim will follow directly from the construction.

We assume here that each gate in $F$ has fan-in $2$ to begin with. We can make this modification to $F$ in the beginning at the expense of increasing the depth, and without increasing the size.

We prove the claim by induction on the size $s$ of the formula. Let $T(s)$ and $D(s)$ denote the maximum size and depth (respectively) of $F'$ thus obtained, assuming that $F$ has size at most $s.$ Let $k = 2^{C/\varepsilon}$ where $C > 0$ is an absolute constant we will choose below. We assume throughout that $\varepsilon < 1$, which is without loss of generality.

The base case of the formula is when $s \leq k,$ in which case the claim is trivial, as any formula of size $s$ has depth at most $s$, which in this case is at most $2^{C/\varepsilon}.$

Now, assume that $s > k.$ In this case, we find a gate $\alpha$ 
such that the size of the subformula $F_\alpha$ rooted at $\alpha$ has size at least $s-s/k$, while the children $\beta$ and $\gamma$ of $\alpha$ do not satisfy this property. It is easy to observe that there is a unique $\alpha$ with this property. 
Let $*$ denote the operation (either $+$ or $\times$) labelling $\alpha.$

We replace $\alpha$ by a fresh variable $y$ in $F$ to get a formula $F_y$. Note that $F_y$ has size at most $s_1 =  \frac{2s}{k}.$ Since $F_y$ has at most one occurrence of $y$, we can write
\[
F_y = A \cdot y \cdot B + C 
\]
where $A$ is the product of all subformulas that multiply $y$ on the left in $F$ along the path to the output, $B$ is similarly the product of all subformulas that multiply $y$ on the right, and $C$ is the polynomial computed by $F_y$ when we set $y$ to $0$. Clearly, $A$, $B$ and $C$ have formulas $F_A, F_B$ and $F_C$ of size at most $s_1$. Finally, we get the formula $F'$ as in Figure~\ref{fig}.
\begin{figure}[h]
	\centering
	\tikzstyle{gate}=[circle,draw=black!50,thick]
	\tikzstyle{leaf}=[minimum size = 20pt]
	\begin{tikzpicture}
		\node[gate, minimum size = 5pt] (output) at (0,0) {$+$};
		\node[gate, minimum size = 5pt] (times1) at (-0.5,-1) {$\times$}
		edge[->] (output);
		\node[inner sep = 0pt] (C) at (0.5,-1) {$F_C'$}
		edge[->] (output);
		\node[gate, minimum size = 5pt] (times2) at (-1,-2) {$\times$}
		edge[->] (times1);
		\node[inner sep = 0pt] (B) at (0,-2) {$F_B'$}
		edge[->] (times1);
		\node[inner sep = 0pt] (A) at (-1.5,-3) {$F_A'$}
		edge[->] (times2);
		\node[gate, minimum size = 5pt] (star) at (-0.5,-3) {$*$}
		edge[->] (times2);
		\node[inner sep = 0pt] (beta) at (-1,-4) {$F_\beta'$}
		edge[->] (star);
		\node[inner sep = 0pt] (gamma) at (0,-4) {$F_\gamma'$}
		edge[->] (star);
	\end{tikzpicture}
	\caption{Constructing $F'$}
	\label{fig}
\end{figure}
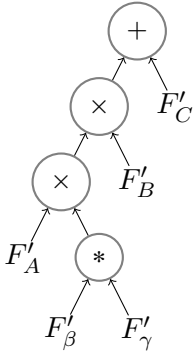

Here, $F_A', F_B', F_C', F_\beta', F_\gamma'$ are the formulas obtained by recursively applying the same procedure to $F_A, F_B, F_C, F_\beta, F_\gamma$.

We can bound the size and depth of the depth-reduced formula by induction. We have
\[
D(s) \leq D(s-s/k) + O(1)
\]
leading easily to an overall depth bound of $O(k\log s)  = 2^{O(1/\varepsilon)}\cdot \log s$ for any choice of the constant $C$. 

For the size, we have the following recursion.
\[
T(s) \leq \max_{\substack{s_\beta,s_\gamma, s_1:\\ s_\beta, s_\gamma \leq s(1-1/k)\\ s_1 \leq 2s/k\\ s_\beta+s_\gamma+ s_1 \leq s}}T(s_\beta) + T(s_\gamma) + 3T(s_1)
\] 
where $s_\beta$ and $s_\gamma$ are the sizes of $F_\beta$ and $F_\gamma$ respectively. We now use induction to bound $T(s)$ as follows. (We omit the conditions on $s_\beta, s_\gamma$ and $s_1$ for notational simplicity.)
\begin{align*}
	T(s) &\leq \max_{s_\beta,s_\gamma, s_1} s_\beta^{1+\varepsilon} + s_\gamma^{1+\varepsilon} + 3 s_1^{1+\varepsilon}\\
	&\leq \left(s\left(1-\frac{1}{k}\right)\right)^{1+\varepsilon} + \left(\frac{s}{k}\right)^{1+\varepsilon} + 3 \left(\frac{2s}{k}\right)^{1+\varepsilon}\\
\end{align*}
where we used the fact that $s_1 \leq 2s/k$ and the fact that, using the convexity of the map $x\mapsto x^{1+\varepsilon},$ the function $s_\beta^{1+\varepsilon} + s_\gamma^{1+\varepsilon}$ is maximized when $\max\{s_\beta,s_\gamma\} = s-s/k$, meaning that $\min\{s_\beta,s_\gamma\}\leq s/k.$ 

Continuing the computation, we get
\begin{align*}
	T(s) \leq s^{1+\varepsilon} \left(\left(1-\frac{1}{k}\right) + \frac{1}{k^{1+\varepsilon}} + 3\frac{4}{k^{1+\varepsilon}}\right)\leq s^{1+\varepsilon}\cdot \left( 1-\frac{1}{k} + \frac{C'}{ k^{1+\varepsilon}}\right)
\end{align*}
for some large enough absolute constant $C'$. Setting $k = 2^{C/\varepsilon}$ for a large enough absolute constant $C$ gives us $T(s) \leq s^{1+\varepsilon}$, proving the inductive claim.

\end{document}